  \providecommand\BibTeX{{%
    \normalfont B\kern-0.5em{\scshape i\kern-0.25em b}\kern-0.8em\TeX}}}
\pgfplotsset{compat=1.16}
\newtheorem{theorem}{Theorem}[section]
\newtheorem{definition}[theorem]{Definition}
\newtheorem{lemma}[theorem]{Lemma}
\DeclareMathOperator*{\argmax}{arg\,max}
\let\bksls\backslash
\let\tbksls\textbackslash
\algrenewcommand\algorithmicrequire{\textbf{Input:}}
\algrenewcommand\algorithmicensure{\textbf{Output:}}
\newcommand{\covSets}[2]{\Lambda_{#1} (#2)}
\newcommand{\rrsets}{\mathcal{R}}
\newcommand{\abs}[1]{\left| #1 \right|}
\newcommand{\anglePair}[1]{\langle #1 \rangle}
\newcommand{\candidates}{\mathcal{C}}
\newcommand{\Ecand}{E_{\candidates}}
\newcommand{\greedyFactor}{1-1/\mathrm{e}-\varepsilon}
\newcommand{\puv}{p_{u,v}}
\newcommand{\Indicate}{\mathbb{I}}
\newcommand{\aisp}{\textsf{AIS\tbksls P}\xspace}
\newcommand{\aisu}{\textsf{AIS\tbksls U}\xspace}
\newcommand{\OUTDEG}{\textsf{OUTDEG}\xspace}
\newcommand{\PROB}{\textsf{PROB}\xspace}
\newcommand{\RAND}{\textsf{RAND}\xspace}
\newcommand{\SINF}{\textsf{SINF}\xspace}
\definecolor{XL2}{RGB}{70, 139, 151}
\definecolor{XL3}{RGB}{239, 98, 98}
\definecolor{XL4}{RGB}{252, 154, 107}
\definecolor{849B81}{RGB}{132, 155, 145}
\definecolor{E1CCB1}{RGB}{225, 204, 177}
\definecolor{D4BAAD}{RGB}{212, 186, 173}
\definecolor{C2CEDC}{RGB}{194, 206, 220}
\definecolor{PINK1}{HTML}{FFF4F2}
\definecolor{PINK2}{HTML}{FEB3AE}
\definecolor{PINK3}{HTML}{F66F69}
\definecolor{BLUE1}{HTML}{1597A5}
\definecolor{DARKGREEN}{HTML}{0E606B}
\definecolor{GREEN}{HTML}{82B29A}
\definecolor{PURPLE}{HTML}{8983BF}
\newenvironment{customlegend}[1][]{%
    \begingroup
    % inits/clears the lists (which might be populated from previous
    % axes):
    \csname pgfplots@init@cleared@structures\endcsname
    \pgfplotsset{#1}%
}{%
    % draws the legend:
    \csname pgfplots@createlegend\endcsname
    \endgroup
}%
\def\addlegendimage{\csname pgfplots@addlegendimage\endcsname}
\newcommand\footnoteref[1]{\protected@xdef\@thefnmark{\ref{#1}}\@footnotemark}
\newcommand{\spara}[1]{\vspace{2mm}\noindent\textbf{#1.}}
\newcommand{\ie}{{i.e.,}\xspace}
\newcommand{\e}{\ensuremath{\mathrm{e}}}
\newcommand{\greedy}{\textsf{Greedy}\xspace}
\newcommand{\mcgreedy}{\textsf{MC-Greedy}\xspace}
\newcommand{\ais}{\textsf{AIS}\xspace}
\begin{document}
%%
%% The "title" command has an optional parameter,
%% allowing the author to define a "short title" to be used in page headers.
% \title{Eliminating $k$ in the Time Complexity of Influence Maximization Algorithms}
% \title{Does $k$ really matter? Tighten the Time Complexity of Reverse Influence Sampling for Influence Maximization}
\title{Link Recommendation to Augment Influence Diffusion with Provable Guarantees}

%%
%% The "author" command and its associated commands are used to define
%% the authors and their affiliations.
%% Of note is the shared affiliation of the first two authors, and the
%% "authornote" and "authornotemark" commands
%% used to denote shared contribution to the research.
% \author{Ben Trovato}
% \authornote{Both authors contributed equally to this research.}
% \email{trovato@corporation.com}
% \orcid{1234-5678-9012}
% \author{G.K.M. Tobin}
% \authornotemark[1]
% \email{webmaster@marysville-ohio.com}
% \affiliation{%
%   \institution{Institute for Clarity in Documentation}
%   \streetaddress{P.O. Box 1212}
%   \city{Dublin}
%   \state{Ohio}
%   \country{USA}
%   \postcode{43017-6221}
% }

\author{Xiaolong Chen}
% \authornotemark{some other note}
\affiliation{%
  \institution{The Hong Kong University of Science and Technology (Guangzhou)}
  \city{Nansha}
  \state{Guangzhou}
  \country{China}}
\email{xchen738@connect.hkust-gz.edu.cn}

\author{Yifan Song}
\affiliation{%
  \institution{The Hong Kong University of Science and Technology (Guangzhou)}
  \city{Nansha}
  \state{Guangzhou}
  \country{China}}
\email{ysong853@connect.hkust-gz.edu.cn}

\author{Jing Tang}
\authornote{Corresponding author: Jing Tang.}
\affiliation{%
  \institution{The Hong Kong University of Science and Technology (Guangzhou)}
  \institution{The Hong Kong University of Science and Technology}
  % \institution{The Hong Kong Uni. of Sci. and Tech.}
  % \city{Nansha}
  % \state{Guangzhou}
  % \country{China}
  \country{}
  }
\email{jingtang@ust.hk}

%%
%% By default, the full list of authors will be used in the page
%% headers. Often, this list is too long, and will overlap
%% other information printed in the page headers. This command allows
%% the author to define a more concise list
%% of authors' names for this purpose.
\renewcommand{\shortauthors}{Xiaolong Chen et al.}

%%
%% The abstract is a short summary of the work to be presented in the
%% article.
\begin{abstract}
Link recommendation systems in online social networks (OSNs), such as Facebook's ``People You May Know'', Twitter's ``Who to Follow'', and Instagram's ``Suggested Accounts'', facilitate the formation of new connections among users. This paper addresses the challenge of link recommendation for the purpose of social influence maximization. In particular, given a graph $G$ and the seed set $S$, our objective is to select $k$ edges that connect seed nodes and ordinary nodes to optimize the influence dissemination of the seed set. This problem, referred to as influence maximization with augmentation (IMA), has been proven to be NP-hard.

In this paper, we propose an algorithm, namely \ais, consisting of an efficient estimator for augmented influence estimation and an accelerated sampling approach. \ais provides a $(1-1/\mathrm{e}-\varepsilon)$-approximate solution with a high probability of $1-\delta$, and runs in $O(k^2 (m+n) \log (n / \delta) / \varepsilon^2 + k \left|E_{\mathcal{C}}\right|)$ time assuming that the influence of any singleton node is smaller than that of the seed set. To the best of our knowledge, this is the first algorithm that can be implemented on large graphs containing millions of nodes while preserving strong theoretical guarantees. We conduct extensive experiments to demonstrate the effectiveness and efficiency of our proposed algorithm.
% In this study, we establish a connection between the IMA problem and the influence estimation problem, and subsequently propose AIS, an algorithm capable of providing a $(1-1/\mathrm{e}-\varepsilon)$-approximate solution with high probability. We prove that AIS runs in time $O(k^2 (m+n) \log (n / \delta) / \varepsilon^2 + k \left|E_{\mathcal{C}}\right|)$, assuming that the influence of any singleton node is smaller than that of the seed set. 
\end{abstract}

%%
%% The code below is generated by the tool at http://dl.acm.org/ccs.cfm.
%% Please copy and paste the code instead of the example below.
%%
\begin{CCSXML}
<ccs2012>
   <concept>
       <concept_id>10002950.10003624.10003633.10010917</concept_id>
       <concept_desc>Mathematics of computing~Graph algorithms</concept_desc>
       <concept_significance>500</concept_significance>
       </concept>
   <concept>
       <concept_id>10002951.10003260.10003282.10003292</concept_id>
       <concept_desc>Information systems~Social networks</concept_desc>
       <concept_significance>500</concept_significance>
       </concept>
 </ccs2012>
\end{CCSXML}

\ccsdesc[500]{Mathematics of computing~Graph algorithms}
\ccsdesc[500]{Information systems~Social networks}

%%
%% Keywords. The author(s) should pick words that accurately describe
%% the work being presented. Separate the keywords with commas.
\keywords{Social Networks, Influence Maximization}

%% A "teaser" image appears between the author and affiliation
%% information and the body of the document, and typically spans the
%% page.

\received{DD MM YYYY}
\received[revised]{DD MM YYYY}
\received[accepted]{DD MM YYYY}

%%
%% This command processes the author and affiliation and title
%% information and builds the first part of the formatted document.
\maketitle

\section{Introduction}
\label{sec:intro}
Online social networks (OSNs) are becoming an increasingly powerful medium for disseminating useful content. Nowadays, individuals are seamlessly connected, forming intricate webs of relationships that facilitate the exchange of information, ideas, and opinions on an unprecedented scale. As a result, understanding and harnessing the mechanisms behind influence spread have become crucial for various domains \cite{guille2013information}, ranging from political campaigns to marketing \cite{domingos2001mining}. Viral marketing, in particular, has gained substantial traction in recent years as a cost-effective and efficient strategy to promote products, services, and ideas.
% , which leverages the interconnectedness of social networks to facilitate one-to-one or many-to-many communication, enabling messages to spread rapidly and organically. 
By harnessing the power of social influence, viral marketing campaigns aim to create a cascade of user engagements through the word-of-mouth effect, leading to exponential growth in reach and impact.

An important scientific problem related to that is influence maximization (IM) \cite{kempe2003maximizing}, which aims to select a set of nodes in a social network as the sources of influence spread to maximize the expected number of influenced nodes. The seminal work of \citet{kempe2003maximizing} formulates this problem as a submodular optimization problem. Their results have sparked a whole line of research on the IM problem \cite{leskovec2007cost, goyal2011celf++, chen2009efficient, chen2010scalable, chen2010scalableLT, borgs2014maximizing, tang2014influence, tang2015influence, huang2017revisiting, tang2018online, guo2022influence}. Such work normally assumes that the topology structure of the network does not change.

Another way to boost the influence spread in an OSN is by increasing the connectivity among users. Some OSN platforms like Twitter already use ``people recommendations'' to increase connectivity. However, most recommendation systems mainly focus on making relevant recommendations without an explicit effort towards augmenting information spread. For example, the ``People You May Know'' feature employs the Friend-of-Friend (FoF) algorithm \cite{liben2003link}, recommending the users that have the highest number of common friends with the target user receiving the recommendation. Other recommendation algorithms may recommend users whose profiles have substantial overlap with the receiver. However, simply recommending connections based on the number of mutual friends or similarity may not maximize the influence spread in a social network. The combination of link recommendation with information diffusion in OSNs opens up new opportunities for product marketing.

Recommending links allows us to tap into the network's inherent structure and dynamics, and it has been extensively studied for a variety of purposes (e.g. maximizing pagerank fairness\cite{tsioutsiouliklis2022link}, minimizing polarization \cite{zhu2021minimizing, adriaens2023minimizing, haddadan2021repbublik}). It also enables us to identify meaningful connections that can facilitate influence diffusion specifically within the desired audience. 
% Furthermore, combining the selection of influential users with link recommendations creates a synergistic effect. The influential users can initiate the spread of influence, while the recommended links enhance and amplify the diffusion process of the seed users. In addition, selecting seed users usually bring high costs in both expense and time. Thus, adding new connections into the network is a more economic approach to amplify the influence spread than selecting new seed users.

In contrast to the previous works of influence maximization where the network topology remains unchanged, we are interested in recommending links that can augment the social influence of a target group of users. More specifically, we aim to select $k$ edges (starting from one of the seed users, and targeting at any ordinary users) from the candidate edge set $\Ecand$ to insert to the network, such that the expected influence spread of the seeds is maximized. This problem is known as influence maximization with augmentation (IMA), which is first proposed by~\citet{d2019recommending}. They show that this problem is NP-hard and the objective function is submodular under the Independent Cascade (IC) model. As a solution, they utilize a greedy approach with $(\greedyFactor)$-approximations~\cite{nemhauser1978analysis}, which unfortunately suffers from a serious scalability issue. 

Specifically, an important element of the greedy algorithm is to compute the influence spread for the seed set, which is \#P-hard \cite{chen2010scalable}. \citet{d2019recommending} propose to estimate the influence spread via Monte-Carlo simulations, where the resultant algorithm is called \mcgreedy. To obtain accurate estimation of the influence spread, a large number of Monte-Carlo simulations are needed, incurring significant computational overheads. Inspired by the groundbreaking reverse influence sampling (RIS) method \cite{borgs2014maximizing}, we propose a RIS-based algorithm with two newly developed innovative techniques, including an efficient estimator and an accelerated sampling approach tailored for the IMA problem. To the best of our knowledge, this is the first algorithm for IMA that can be applied to large-scale networks with theoretical guarantees.

\spara{Contributions} In this paper, we study the IMA problem and present \ais (\textbf{A}ugmenting the
\textbf{I}nfluence of \textbf{S}eeds) that overcomes the deficiencies of \mcgreedy. We illustrate the connection between the influence estimation problem and the IMA problem, which helps us to compute the marginal gain of a given candidate edge more efficiently by RIS. With this important observation, we propose an approximation algorithm for the IMA problem with two acceleration techniques based on the nature of the IMA problem. Theoretically, our algorithm achieves an approximation ratio of $(\greedyFactor)$ with a high probability of $1-\delta$ and runs in $O(\frac{k^2(m+n)\log (n/\delta)}{\varepsilon^{2}} + k\abs{\Ecand})$ time when every singleton node's influence is smaller than the seed set. Practically, with extensive experiments on various datasets, our algorithm outperforms the baselines, and is the first method that can be applied to large datasets without any compromise of theoretical assurance.
In summary, our contributions are as follows:
\begin{enumerate}
\item We develop an efficient estimator for augmented influence estimation by building an connection between the IMA problem and the influence estimation problem, and propose novel techniques for accelerating sampling process. 
\item We propose \ais based on our sampling approach that returns a $(\greedyFactor)$-approximate solution with probability $1-\delta$ for the IMA problem, and runs in $O(k^2 (m+n) \log (n / \delta) / \varepsilon^2 + k\abs{\Ecand})$ time if any singleton node's influence is less than that of the seed set. 
% We show that AIS is superior to the state-of-the-art MC-Greedy algorithm both theoretically and practically.
\item We perform extensive experiments on various real-world datasets with up to millions of nodes and billions of edges and demonstrate the effectiveness and efficiency of our algorithm. 
\end{enumerate}
\spara{Organization} The rest of the paper is organized as follows. Section \ref{sec:related_work} reviews the related work. Section \ref{sec:prelim} gives the problem definition and introduces some necessary preliminaries. Section \ref{sec:marg_gain_comp} devises the design of the \ais algorithm. Section \ref{sec:theory} provides theoretical analysis. Section \ref{sec:exp} evaluates the effectiveness and efficiency of our method. Finally, Section \ref{sec:conclusion} concludes this work.

\section{Related Work}\label{sec:related_work}
\subsection{Influence Maximization}
In 2003, \citet{kempe2003maximizing} publish the first algorithmic study on the influence maximization problem. They show that this problem under the independent cascade (IC) model is NP-hard, and propose a greedy algorithm to approximate the solution with a factor of $(\greedyFactor)$. The key idea is to select the node that gives the most significant increment of expected influence spread estimated via Monte-Carlo simulations \cite{metropolis1949monte}. Later, there have been several attempts \cite{bharathi2007competitive, leskovec2007cost, chen2009efficient, chen2010scalable, chen2010scalableLT, goyal2011celf++, jung2012irie, goyal2011simpath, ohsaka2014fast} to improve the efficiency. 
% Among these techniques, the most well-known ones are CELF \cite{leskovec2007cost}, and CELF++ \cite{goyal2011celf++}, which leverage the monotonicity and submodularity of the influence function. Although they have the same worst-case time complexity as the simple greedy algorithm, they can provide higher empirical efficiency.

In 2014, \citet{borgs2014maximizing} make a significant breakthrough by presenting a near-linear time algorithm under the IC model. They introduce the concept of reverse influence sampling (RIS), which transforms the IM problem into a maximum coverage problem and provides a $(\greedyFactor)$-approximate solution with high probability. Subsequently, \citet{tang2014influence} point out the shortcomings of Borgs's algorithm and then propose \textsf{TIM} to enhance the efficiency of RIS in practice. \citet{tang2015influence} propose \textsf{IMM} by adopting a martingale approach, which reduces the number of RR sets to deliver a guaranteed result. 
Later, \cite{tang2018online} develop an RIS-based online processing algorithm and a novel approach to compute empirical guarantees, enabling early stopping of the RIS-based algorithms. In addition to the number of RR sets, researchers have also endeavored to design more sophisticated sampling techniques. \citet{guo2022influence} employ subset sampling to reduce the time complexity, referred to as \textsf{SUBSIM}. \citet{zhu20212} propose \textsf{2-hop+} sampling to boost estimation.
% \vspace{-1.5mm}
% \vspace{-3mm}
\subsection{Diffusion-Aware Link Manipulation}
There has been a plethora of research on diffusion-aware link manipulation in social networks \cite{chaoji2012recommendations, kuhlman2013blocking, khalil2014scalable, d2019recommending, coro2021link, huang2022influence, sun2023scalable}. This category of work aims to optimize information diffusion-related functions by adding or removing a limited number of edges in the social network. For instance, \citet{khalil2014scalable} consider two optimization problems, \ie~adding edges to maximize influence spread and deleting edges to minimize influence spread. \citet{yang2019marginal,yang2020boosting} investigate how to add a limited number of edges from a candidate set to maximize the seed's influence in a directed acyclic graph under the IC model. 
% They highlight the non-submodularity of this problem and propose a sandwich strategy to provide approximate solutions. 
\citet{huang2022influence} examine the problem of selecting edges from the original network to maximize the influence of specific seeds via these edges. 
\citet{sun2023scalable} propose algorithms to identify fragile nodes and edges to attack to reduce the influence spread.

\begin{table}[!t]
\centering%把表居中
\caption{Frequently used notations}%标题
\label{table:notations}
\begin{tabular}[t]{|l|l|}
\hline
Notation & Description \\
\hline
$G$\ \ & A social network\\
\hline
$V, E$\ \ & The set of nodes and edges, respectively\\
\hline
$n, m$\ \ & \makecell{The number of nodes and edges in G}\\
\hline
$\Ecand$\ \ & Candidate edges set\\
\hline
$k$\ \ & The number of edges to be selected\\
\hline
$S, A$\ \ & The seed set, and the edge set\\
\hline
$\sigma(A, S)$\ \ & \makecell{The augmented influence spread of $S$ \\ after adding the edge set $A$}\\
\hline
% $\varepsilon_0$\ \ & Error term when using $\theta$ RR sets\\
% $\varepsilon$\ \ & Error term introduced by sketching\\
$R,\rrsets$ \ \ & A random RR set, a collection of RR sets\\
\hline
$\Lambda_\mathcal{R}(S)$ \ \ & \makecell{The number of RR sets in $\mathcal{R}$ that intersects with $S$}\\
\hline
% $F_\mathcal{R}(S)$ \ \ & \makecell{The fraction of RR sets in $\mathcal{R}$ that are\\ covered by $S$}\\
% \hline
\end{tabular}
\end{table}
Different from these problems, we are interested in the IMA problem under the IC model proposed by \citet{d2019recommending}, which aims to select $k$ edges incident to the seed set to maximize the influence of the seed set $S$. The objective function is a monotone and submodular~\cite{d2019recommending}, and a Monte-Carlo-simulation-based greedy approach can achieve an approximation ratio of $(1-1/\e-\varepsilon)$~\cite{nemhauser1978analysis}. 
% They prove that such problem is NP-hard to approximate within a factor greater than $1 - 1/(2\e)$.
\citet{coro2021link} study the IMA problem under the LT model and show that the objective function is modular, indicating that a Monte-Carlo-simulation-based greedy approach provides $(1-\varepsilon)$-approximate solution. However, such Monte-Carlo-simulation-based methods~\cite{d2019recommending,coro2021link} suffer from prohibitive computation overheads to provide theoretical guarantees. We tackle this issue via a non-trivial adoption of RIS. 

\section{Preliminaries}\label{sec:prelim}
In this section, we give a formal definition of the influence maximization with augmentation (IMA) problem, and introduce the greedy framework as well as the technique of reverse influence sampling (RIS). Notations that are frequently used in this paper are given in Table \ref{table:notations} for ease of reference.

\subsection{Problem Definition}
Let $G$ be a social network with a node set $V$ (representing users) and a directed edge set $E$ (representing connections among users), with $\abs{V} = n$ and $\abs{E} = m$. Each directed edge $\langle u, v \rangle \in E$ of $G$ is associated with a propagation probability $\puv \in [0, 1]$, representing the probability that $u$ can influence $v$.

In this paper, we study the basic and widely adopted independent cascade (IC) model. Initially, at timestamp $0$, the selected seed nodes are activated, while all others are inactive. When a node first becomes activated at timestamp $i$, it has one single chance to activate its inactive neighbors with probability $p_{u, v}$ at timestamp $i+1$, and this node will remain active till the end of the propagation process. The diffusion process terminates when no more nodes in the graph can be activated.

Let $\sigma(S)$ denote the expected number of nodes activated by a seed set $S$ in graph $G$, which is also called the \textit{expected spread} of $S$. The traditional IM problem asks for a set $S$ of seed nodes with the largest expected spread $\sigma(S)$. In this paper, we study the influence maximization with augmentation (IMA) problem~\cite{d2019recommending} that aims to add $k$ edges from a candidate edge set $\Ecand$ to the original edge set $E$ to augment the expected spread of a given seed set $S$ as much as possible.    
% What we are interested in here, is to select $k$ edges from a candidate set $\Ecand$ to add into the network, such that the expected spread of $S$ can be maximized. This problem is named the Influence Maximization with Augmentation (IMA) problem, first proposed by D'Angelo et al. \cite{d2019recommending}. 
Denote by $G(A)$ the augmented graph with a set $A$ of edges added to the base graph $G$, and by $\sigma(A, S)$ the augmented influence spread of $S$ on $G(A)$. For notational simplicity, let $\sigma(S)=\sigma(\emptyset, S)$. The IMA problem is formally defined as follows.
\begin{definition}[IMA~\cite{d2019recommending}]
Given a graph $G = (V, E)$, a seed set $S$, a candidate edge set $E_{\candidates} \subseteq (S\times V) \bksls E$, and a budget $k$, the influence maximization with augmentation (IMA) problem asks for $k$ edges in $E_{\candidates}$ that maximizes the augmented influence spread of $S$. That is, 
    \begin{equation}
            A^* = \argmax_{A\subseteq E_{\candidates},\,\abs{A}\leq k} \sigma(A, S).
            % \textrm{s.t.}& \quad A \subseteq (S\times V).
    \end{equation}
\end{definition}
%\citet{d2019recommending} show that the IMA problem under the IC model is NP-hard to approximate within a factor greater than $1-\frac{1}{2e}$.
%Here we point out that the result is not tight. We state that the factor could be naturally extended to $1-1/\e$. Because every IM problem can be seen as an IMA problem in the following manner. 
In what follows, we show that the IMA problem under the IC model is NP-hard to approximate within a factor greater than $1-{1}/{\e}$. For a graph $G$, we consider an isolated node $u$ as the seed set and $p_{u,v}=1$ for each $v\in V\setminus\{u\}$. Then, selecting $k$ edges to maximize $\sigma(A, \{u\})$ is equivalent to selecting a set $T$ of $k$ nodes in $V\setminus \{u\}$ to maximize $\sigma(T)$, \ie~the IM problem. It is known that the IM problem is NP-hard to approximate within a factor of $(1-1/\e+\varepsilon)$ for any $\varepsilon$ under the IC model \cite{kempe2003maximizing,kempe2005influential,schoenebeck2020influence}. As a consequence, such a hardness result also applies to the IMA problem.

\subsection{Greedy Framework}
\citet{d2019recommending} show that the objective function $\sigma(A, S)$ is submodular with respect to the edge set $A$. Therefore, the greedy algorithm can achieve an approximation ratio of $(1-1/\e)$~\cite{nemhauser1978analysis}. Moreover, the computation of influence spread is \#P-hard \cite{chen2010scalable}. \citet{chen2013information} show that the greedy algorithm on an accurate estimate of a monotone submodular function achieves an approximation ratio of $(\greedyFactor)$. 
\begin{lemma}[\cite{chen2013information}]\label{thm:mc_err}
    Let $A^* = \argmax_{\abs{A}\leq k} f(A)$ be the set maximizing $f(A)$ among all sets with size at most $k$, where $f$ is monotone and submodular, and $f(\emptyset) = 0$. For any $\varepsilon > 0$ and any $0< \lambda \leq \frac{\varepsilon / k}{2 + \varepsilon / k}$, if a set function $\hat{f}$ is a multiplicative $\lambda$-error estimate of set function $f$, the output $A^g$ of the greedy algorithm on $\hat{f}$ guarantees
    \begin{equation}
        f(A^g) \geq (1-1/\e-\varepsilon)f(A^*).
    \end{equation}
\end{lemma}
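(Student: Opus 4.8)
The plan is to run the classical greedy-on-submodular analysis, but to carry the two-sided estimate $(1-\lambda)f(A)\le\hat f(A)\le(1+\lambda)f(A)$ (the meaning I take for a multiplicative $\lambda$-error estimate) through every iteration, so that the error incurred by optimizing $\hat f$ instead of $f$ is tracked explicitly and ultimately matched against the threshold on $\lambda$. Write $A_0=\emptyset\subseteq A_1\subseteq\cdots\subseteq A_k=A^g$ for the greedy iterates, and let $g_i=f(A^*)-f(A_i)$ be the optimality gap; since $f(\emptyset)=0$ we have $g_0=f(A^*)$. Because $\abs{A_i}\le k$ and $A^*$ is optimal over all sets of size at most $k$, we also have $f(A_i)\le f(A^*)$ throughout, which I will use to decouple the error term.

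First I would analyze a single step. Let $e_{i+1}$ be the element greedy adds, i.e.\ the maximizer of the $\hat f$-marginal at $A_i$, and let $e^*$ maximize the true marginal $f(A_i\cup\{e\})-f(A_i)$. Since greedy prefers $e_{i+1}$ under $\hat f$, we have $\hat f(A_i\cup\{e_{i+1}\})\ge\hat f(A_i\cup\{e^*\})$; feeding the upper bound on the left and the lower bound on the right yields $f(A_{i+1})\ge\frac{1-\lambda}{1+\lambda}f(A_i\cup\{e^*\})$. Combining this with the standard submodularity-plus-monotonicity inequality $f(A_i\cup\{e^*\})-f(A_i)\ge\frac1k\bigl(f(A^*)-f(A_i)\bigr)$ (obtained by decomposing $f(A^*\cup A_i)$ into marginals, each at most the best marginal, over the at most $k$ elements of $A^*$) and using $f(A_i)\le f(A^*)$ in the error term, I get the linear recurrence
\[ g_{i+1}\le\Bigl(1-\tfrac{1-\lambda}{(1+\lambda)k}\Bigr)g_i+\tfrac{2\lambda}{1+\lambda}f(A^*). \]

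Next I would unroll this recurrence. Writing $\alpha=\frac{1-\lambda}{(1+\lambda)k}$ and $\eta=\frac{2\lambda}{1+\lambda}f(A^*)$, the geometric sum gives $g_k\le(1-\alpha)^kf(A^*)+\frac{\eta}{\alpha}\bigl(1-(1-\alpha)^k\bigr)$, which I would \emph{not} loosen to $\eta/\alpha$ but instead rearrange into the clean product form $f(A^g)=f(A^*)-g_k\ge\bigl(1-(1-\alpha)^k\bigr)\bigl(f(A^*)-\tfrac{\eta}{\alpha}\bigr)$. A direct computation shows $\frac{\eta}{\alpha}=\frac{2\lambda k}{1-\lambda}f(A^*)$, and the hypothesis $\lambda\le\frac{\varepsilon/k}{2+\varepsilon/k}$ is precisely the condition that makes $\frac{2\lambda k}{1-\lambda}\le\varepsilon$; meanwhile $\alpha k=\frac{1-\lambda}{1+\lambda}$, so $(1-\alpha)^k\le\e^{-(1-\lambda)/(1+\lambda)}$. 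Substituting both, $f(A^g)\ge\bigl(1-\e^{-(1-\lambda)/(1+\lambda)}\bigr)(1-\varepsilon)f(A^*)$.

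The final step, which I expect to be the main obstacle, is the elementary inequality certifying that this product is still at least $(1-1/\e-\varepsilon)f(A^*)$, even though the exponent $\frac{1-\lambda}{1+\lambda}<1$ has degraded the base $(1-1/\e)$ factor. I would reduce the claim to $\e^{-(1-\lambda)/(1+\lambda)}\le\frac{1}{\e(1-\varepsilon)}$, i.e.\ to $\frac{1-\lambda}{1+\lambda}\ge1+\ln(1-\varepsilon)$, and then chain the elementary bounds $\frac{1-\lambda}{1+\lambda}\ge\frac{1}{1+\varepsilon/k}\ge1-\varepsilon/k\ge1-\varepsilon\ge1+\ln(1-\varepsilon)$, the middle step using $k\ge1$. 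The delicacy here—keeping the geometric sum exact rather than bounding it crudely, and matching the aggregated two-sided error to the precise value of $\lambda$—is exactly what forces the stated constant $\frac{\varepsilon/k}{2+\varepsilon/k}$, so this is where I would take the most care.
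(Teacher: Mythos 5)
Your proof is correct; note that the paper does not prove this lemma at all---it is imported verbatim from the cited reference---and your argument is essentially the standard one for greedy under a multiplicative oracle error: a per-iteration recurrence on the gap $g_i=f(A^*)-f(A_i)$, unrolled exactly so that the aggregated error $\eta/\alpha=\frac{2\lambda k}{1-\lambda}f(A^*)$ matches the threshold $\lambda\le\frac{\varepsilon/k}{2+\varepsilon/k}$ with equality, followed by the chain $\frac{1-\lambda}{1+\lambda}\ge\frac{1}{1+\varepsilon/k}\ge 1-\varepsilon/k\ge 1-\varepsilon\ge 1+\ln(1-\varepsilon)$, whose first link is tight at the threshold. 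Every step checks out, including the substitution order that lets $f(A_i)\le f(A^*)$ (valid since $\abs{A_i}\le k$) absorb the error term. The only cosmetic omission is that the last link uses $\ln(1-\varepsilon)$, so you should dispose of $\varepsilon\ge 1$ in one line, where the claim is vacuous because monotonicity and $f(\emptyset)=0$ give $f(A^g)\ge 0\ge(1-1/\e-\varepsilon)f(A^*)$.
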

By Lemma~\ref{thm:mc_err}, if $\widehat{\sigma}(A, S)$ is a multiplicative $\lambda$-error estimate of $\sigma(A, S)$, \greedy (\ie~Algorithm~\ref{alg:MC_greedy}) returns a $(\greedyFactor)$-approximate solution for the IMA problem. A naive way is to instantiate \greedy using Monte-Carlo simulations for estimating $\sigma(A, S)$, referred to as \mcgreedy. However, as analyzed in Section \ref{sec:theory}, \mcgreedy has a high time complexity of $O(\frac{k^3mn\abs{\Ecand} \log(\abs{\Ecand}/\delta)}{\varepsilon^2})$, where $\delta\in(0,1)$ is a predefined threshold on failure probability.
\begin{algorithm}[!tbp]
\caption{\greedy}\label{alg:MC_greedy}
\begin{algorithmic}[1]
\Require $G, S, \Ecand, k$;
\Ensure An size-$k$ edge set $A$; 
%\Require Monotone and submodular function $f$, size $k$ of returned set
%\Ensure Selected subset $A_k$ with size $k$
\State $A \gets \emptyset$;
\For{$i\gets1$ to $k$}
%    \For{$e \in E_\candidates$}
%        \State $\widehat{\sigma}(A_k \cup \{e\}, S) \gets$ Monte-Carlo Sampling
%    \EndFor
    \State $e^\ast \gets \argmax_{e\in \Ecand\setminus A} \widehat{\sigma}(A \cup \{e\}, S)$;\label{algline:greedyselection}
    \State $A \gets A \cup \{e^\ast\}$;
\EndFor
\State \Return $A$;
\end{algorithmic}
\end{algorithm}

%\begin{algorithm}[!tbp]
%	\caption{\texttt{MC-Greedy}}\label{alg:MC_greedy}
%	\begin{algorithmic}[1]
%		\Require $G, k, \Ecand$
%		\Ensure An edge set $A_k$ with size $k$
%		\State $A_k \gets \emptyset$
%		\For{$i\gets1$ to $k$}
%		\For{$e \in E_\candidates$}
%		\State $\widehat{\sigma}(A_k \cup \{e\}, S) \gets$ Monte-Carlo Sampling
%		\EndFor
%		\State $e^* \gets \argmax_{e} (\widehat{\sigma}(A_k \cup \{e\}, S) - \widehat{\sigma}(A_k, S))$
%		\State $A_k \gets A_k \cup \{e\}$
%		\EndFor
%		\State \Return $A_k$
%	\end{algorithmic}
%\end{algorithm}
% since in every iteration, we need to carry out the estimation for $O(k\abs{\Ecand})$ times and in every simulation, we may need $O(m)$ time.

% and $(1-\varepsilon)$-approximate solutions under the IC and LT model, respectively.

% Previous studies \cite{d2019recommending, coro2021link} state that the objective function $\sigma(\cdot, S)$ is submodular under the IC model, and modular under the LT model. This property makes the standard greedy algorithm able to provide a theoretical guarantees. 
% However, the computation of influence spread is \#P-hard \cite{chen2010scalable}. A naive way is to use Monte-Carlo simulations to estimate $\sigma(A, S)$. Such sampling based method brings an extra error term to the approximation ratios. So their MC-Greedy algorithm, as shown in Algorithm \ref{alg:MC_greedy}, provides $(\greedyFactor)$ and $(1-\varepsilon)$-approximate solutions under the IC and LT model, respectively.
% \input{sections/algo/get_rand_rrset}
% \input{sections/algo/simple_ris.tex}

\subsection{Reverse Influence Sampling}\label{subset:RIS}
\citet{borgs2014maximizing} propose a novel idea of reverse sampling for the IM problem, referred to as reverse influence sampling (RIS). It uses sketch samples called \textit{random reverse reachable (RR) sets} to estimate the expected spread of a seed set. A random RR set can be constructed in two steps.
\begin{enumerate}
    \item Select a node $v$ from $V$ uniformly at random.
    \item Collect a sample set $R$ of the nodes in $V$, such that for any $u\in V$, the probability that it appears in $R$ equals the probability that $u$ can activate $v$ in an influence propagation process. 
\end{enumerate}
A key observation is established as follows.
\begin{lemma}[\cite{borgs2014maximizing}]
\label{lem:est_spread}
For any seed set $S\in V$ and a random RR set $R$,
\begin{equation*}
    \sigma(S) = n\cdot \Pr[S\cap R \neq \emptyset].
\end{equation*}
\end{lemma}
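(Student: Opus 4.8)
The plan is to exploit the well-known live-edge (or ``possible world'') reformulation of the IC model and then combine linearity of expectation with a reachability duality. First I would recall that under the IC model each edge $\anglePair{u,v}\in E$ is independently declared \emph{live} with probability $\puv$, yielding a random subgraph $g$; the set of nodes activated by a seed set $S$ is exactly the set of nodes reachable from $S$ in $g$. This lets me write $\sigma(S) = \expect{\abs{\{v\in V : v \text{ reachable from } S \text{ in } g\}}}$, where the expectation is taken over the random live-edge graph $g$.

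By linearity of expectation, $\sigma(S) = \sum_{v\in V}\Pr_g[v \text{ reachable from } S \text{ in } g]$. The key observation is a reachability duality for a fixed graph $g$: the event that $v$ is reachable from some node of $S$ is identical to the event that $S$ intersects the set $R_v(g)$ of all nodes that can reach $v$ in $g$. Hence $\Pr_g[v \text{ reachable from } S] = \Pr_g[S\cap R_v(g)\neq\emptyset]$, and this is purely a deterministic statement about directed reachability applied pointwise to each realization of $g$.

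Next I would identify the random RR set $R$ with this construction. Sampling $v$ uniformly from $V$ and then collecting the nodes that can activate $v$ is precisely drawing $R = R_v(g)$ for a random live-edge graph $g$, since by the definition in step~(2) the probability that $u$ lands in $R$ matches the probability that $u$ reaches $v$ in $g$. Marginalizing over the uniform choice of $v$ gives $\Pr[S\cap R\neq\emptyset] = \frac{1}{n}\sum_{v\in V}\Pr_g[S\cap R_v(g)\neq\emptyset]$. Multiplying by $n$ and substituting the two identities above yields $n\cdot\Pr[S\cap R\neq\emptyset] = \sum_{v\in V}\Pr_g[v \text{ reachable from } S] = \sigma(S)$, as claimed.

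The main obstacle is making the reachability duality fully rigorous while keeping the two sources of randomness --- the uniform choice of $v$ and the edge-liveness of $g$ --- consistently coupled. In particular, I must verify that the RR-set construction in the lemma's definition (where $u$ appears with the probability that it can activate $v$) genuinely coincides with the reverse reachable set $R_v(g)$ under the same live-edge distribution; this equivalence is what licenses interchanging the forward and reverse views. Once that coupling is pinned down, everything else reduces to linearity of expectation and a deterministic graph-theoretic identity.
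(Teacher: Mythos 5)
Your proof is correct, and it is essentially the canonical argument: the paper itself states this lemma without proof (importing it from the cited Borgs et al.\ work), and your route --- live-edge realization of the IC model, linearity of expectation over nodes, the deterministic duality between ``$v$ reachable from $S$ in $g$'' and ``$S \cap R_v(g) \neq \emptyset$'', and marginalizing over the uniform choice of $v$ --- is exactly how the source establishes it. You are also right to flag the coupling as the one point needing care: the lemma's step~(2) only asserts a marginal property of each $u$, which alone would not determine $\Pr[S\cap R\neq\emptyset]$ (that event depends on the joint distribution of memberships), and the resolution is precisely your identification $R = R_v(g)$ for a single shared live-edge sample $g$, under which the marginals match by the principle of deferred decisions.
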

% In what follows, we first introduce two basic concepts of RIS.
% \begin{definition}[RR Set]
% Let $v$ be a node in $G$, and $g$ be the graph obtained by removing each edge in $G$ with probability $1-p(e)$. The reverse reachable set (RR set) for $v$ is the set of nodes in $g$ that can reach $v$.
% \end{definition}
% \begin{definition}[Random RR Set]
% Suppose $\mathcal{G}$ is the distribution of $g$ induced by the randomness in edge removals from $G$. A random RR set is defined as an RR set generated by a random sample from $\mathcal{G}$, for a node selected uniformly at random from $g$.
% \end{definition}

% In the work of Borgs et al.\ 

% \noindent 
% This lemma allows us to use random RR sets to estimate the expected influence for any seed set. To further explain, given a set $\rrsets$ of random RR sets, we say a RR set $R\in \rrsets$ is \textit{covered} by a node set $S$ if $R\cap S \neq \emptyset$. Using $\covSets{\rrsets}{S}$ to denote the set of RR sets that are covered by $S$, we know that $\frac{n}{\abs{\rrsets}} \covSets{\rrsets}{S}$ is an unbiased estimator of $\sigma(S)$.

According to Lemma~\ref{lem:est_spread}, the expected spread of any seed set $S$ can be estimated by random RR sets. Specifically, given a set $\rrsets$ of random RR sets, we say that a RR set $R\in \rrsets$ is \textit{covered} by a node set $S$ if $R\cap S \neq \emptyset$. Denote by $\covSets{\rrsets}{S}$ the number of RR sets that are covered by $S$, referred to as the coverage of $S$ in $\rrsets$. Then, $\frac{n}{\abs{\rrsets}} \covSets{\rrsets}{S}$ is an unbiased estimate of $\sigma(S)$. On the basis of RIS, \citet{borgs2014maximizing} propose a general framework for IM consisting of two steps. That is, we first (i) generate a set $\rrsets$ of random RR sets, and then (ii) identify a node set $S^\ast$ with the maximum coverage in $\rrsets$ (via a standard greedy method). Utilizing the RIS technique, the state-of-the-art IM algorithms \cite{tang2015influence,huang2017revisiting,tang2018online,guo2022influence} with reduced computation overheads have been proposed that ensure $(1-1/\e-\varepsilon)$-approximations.

Intuitively, we may leverage the power of RIS for IMA. In particular, addressing IMA requires an effective and efficient estimation of $\sigma(A,S)$ for various edge sets $A$'s under a given node set $S$. Note that for IM, a random RR set is constructed independently from $S$ for estimating the expected spread $\sigma(S)$, since graph $G$ is static. In contrast, for IMA, generating a random RR set for estimating the augmented influence spread $\sigma(A,S)$ relies on the newly added edge set $A$, as the augmented graph $G(A)$ changes. Therefore, the key challenge for IMA via RIS lies in an efficient way to generate random RR sets that can estimate $\sigma(A,S)$ accurately with respect to different $A$.   
% \tj{what are the key challenges? Elaborate them here.}

% With the above observation, Borgs et al.'s RIS framework runs in two steps, as shown in Algorithm \ref{alg:RIS}. First, a set $\rrsets$ of RR sets will be generated; Then, we apply a standard greedy algorithm to select a size-$k$ seed set $S_k$ with large coverage in $\rrsets$. In a more detailed manner, we first initialize $S_k$ as $\emptyset$ and for $k$ iterations, select the node that maximizes the coverage increment 
% \begin{equation}
%     \covSets{\rrsets}{v | S_k} = \covSets{\rrsets}{S_k \cup \{v\}} - \covSets{\rrsets}{S_k}.
% \end{equation}
% It means that $\covSets{\rrsets}{v | S_k}$ denotes the RR sets that are covered by $\{v\}$ but not covered by $S_k$.

% After Borgs et al.'s groundbreaking work, researchers have made many efforts \cite{tang2014influence, tang2015influence, tang2018online, nguyen2016stop, guo2022influence} to make this algorithm more efficient in practice, mainly focusing on reducing the number of RR sets needed to provide a guaranteed solution.

% The key rationale of RIS is that the node appearing in a large number of RR sets should have a large probability to activate many nodes. By considering every node as a set and every RR set as an element, a coverage graph $\covGraph$ can be obtained. Intuitively, the IM problem is reduced to a maximum coverage problem on $\covGraph$. An example is shown in Figure \ref{fig:ris_cover}.

% \input{figures/_tex/fig_rr_set_as_cover.tex}

% \input{sections/algo/node_select.tex}

\section{An Efficient Approximation Algorithm for IMA}\label{sec:marg_gain_comp}
%\xiaolong{Under the greedy framework, we will select the edge with the highest marginal gain in every iteration, denoted by $\sigma(A\cup\{e\}, S) - \sigma(A, S)$. Essentially, if the computation of marginal gains can be accelerated, the greedy algorithm's efficiency will be improved.}
We leverage the greedy framework, \ie~Algorithm~\ref{alg:MC_greedy}, that achieves the best approximation ratio for the IMA problem. The core procedure of Algorithm~\ref{alg:MC_greedy} is to select the edge $e$ with the largest estimated $\widehat{\sigma}(A\cup\{e\}, S)$ of the augmented influence spread $\sigma(A\cup\{e\}, S)$ (Line~\ref{algline:greedyselection}). In this section, we first propose an efficient estimator built upon RR sets that can estimate $\widehat{\sigma}(A\cup\{e\}, S)$ with sufficient accuracy considering all $e$'s simultaneously. Based on such an estimator, we devise a scalable algorithm, namely \ais, that provides an approximation of $(\greedyFactor)$ for IMA.  

\subsection{An Efficient Estimator via RR Sets}\label{subsec:estimator}
%In each iteration, we select the edge with the highest marginal influence gain upon the chosen edge set $A$, \ie~$\argmax_e \sigma(A\cup\{e\}, S) - \sigma(A, S)$. 
%Due to the intractable computation of influence spread, we propose to utilize RR sets for estimating the marginal influence gain of an edge. 
When examining each edge $e$ in the greedy selection procedure, since the augmented graph $G(A\cup\{e\})$ varies, we should generate distinct RR sets for estimating $\sigma(A\cup\{e\},S)$ as mentioned in Section~\ref{subset:RIS}, incurring prohibitive overheads. To tackle this issue, we propose a novel unbiased estimate of $\sigma(A\cup\{e\}, S)$ that can break the dependency between RR sets generation and the candidate edge $e$ to be examined. Specifically, we express $\sigma(A\cup\{e\}, S)$ as a combination of $\sigma(A, S\cup \{v\})$ and $\sigma(A, S)$. Since the augmented graph $G(A)$ is independent of $e$, the RR sets for estimating $\sigma(A, S\cup \{v\})$ and $\sigma(A, S)$ can be generated independently from $e$.
\begin{lemma}\label{lem:link_marginal_gain}
	Under the IC model, for any edge $e=\anglePair{u,v}\in E_{\candidates}\setminus A$ associated with a propagation probability $\puv$, we have
	\begin{equation*}
		\sigma(A \cup \{e\}, S) =  \puv \cdot \sigma(A, S\cup \{v\}) +(1-\puv)\cdot \sigma(A, S).
	\end{equation*}
\end{lemma}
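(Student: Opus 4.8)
The plan is to invoke the standard live-edge (possible-world) interpretation of the IC model and then condition on the status of the newly added edge $e$. Recall that in this interpretation each edge $\anglePair{x,y}$ of the graph is independently declared \emph{live} with probability $p_{x,y}$ and \emph{blocked} otherwise, yielding a random subgraph $g$; the augmented spread $\sigma(A,S)$ is exactly the expected number of nodes reachable from $S$ over this random subgraph of $G(A)$. I would first set up this framework carefully for the augmented graph $G(A\cup\{e\})$, and then isolate the contribution of the single added edge.

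The key structural observation I would exploit is that $e=\anglePair{u,v}\in\Ecand\subseteq(S\times V)\setminus E$, so $u\in S$ is a seed node and is therefore activated in every realization. Let $X_e$ denote the indicator that $e$ is live, which is independent of the status of all other edges. Conditioning on $X_e$ splits the analysis into two cases. When $e$ is live (probability $\puv$), the live edge from the always-active seed $u$ forces $v$ to be reachable from $S$; hence, holding all other edge statuses fixed, the set of nodes reachable from $S$ in $G(A\cup\{e\})$ coincides with the set reachable from $S\cup\{v\}$ in $G(A)$. Since the randomness over every other edge is identical in the two graphs, the conditional expectation equals $\sigma(A,S\cup\{v\})$. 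When $e$ is blocked (probability $1-\puv$), the added edge is absent from $g$ and plays no role, so the conditional expectation is simply $\sigma(A,S)$.

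Applying the law of total expectation over $X_e$ then yields
\begin{equation*}
	\sigma(A\cup\{e\},S)=\puv\cdot\sigma(A,S\cup\{v\})+(1-\puv)\cdot\sigma(A,S),
\end{equation*}
as claimed. The main subtlety to argue carefully is the live-edge case: the identification of ``$e$ live'' with ``$v$ promoted to a seed'' is valid \emph{only} because $u\in S$ guarantees that $u$ is active regardless of the propagation; for a non-seed tail this equivalence would fail, which is precisely why the candidate set is restricted to edges emanating from $S$. Everything else is a routine decomposition once the independence of $X_e$ from the remaining edges is invoked.
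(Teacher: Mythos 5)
Your proof is correct and follows essentially the same route as the paper's: both use the live-edge (possible-world) representation, condition on whether $e$ is live or blocked, and identify the live case with promoting $v$ to a seed via $I_{g\cup\{e\}}(S)=I_g(S\cup\{v\})$, which holds because $u\in S$ is active in every realization. Your explicit remark that this identification would fail for a non-seed tail is a welcome clarification of a point the paper leaves implicit, but the argument itself is the same.
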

\begin{proof}
	Consider the live-edge graph expression $g$ of $G$ such that $g$ is generated by independently flipping a coin of bias $p_{u,v}$ for each edge $\langle u, v\rangle \in E$ to decide whether the edge is live or blocked, referred to as $g\sim G$. Let $I_g(S)$ be the influence spread of $S$ on $g$, \ie~the number of nodes that are reachable from $S$ on the sample outcome $g$. Then, the expected spread $\sigma(A, S)$ can be written as
	\begin{equation*}\label{eq:sigma_AS}
		\sigma(A, S) = \sum_{g \sim G(A)} \big(\Pr[g]\cdot I_g(S)\big).
	\end{equation*}
	Similarly, 
	\begin{equation*}
		\sigma(A\cup \{e\}, S) = \sum_{g^\prime \sim G(A \cup \{e\})} \big(\Pr[g^\prime]\cdot I_{g^\prime}(S)\big).
	\end{equation*}
	According to the distribution of $g$ and $g^\prime$, let
	\begin{equation*}
		g^\prime = 
		\begin{cases}
			g\cup{\{e\}}, &\text{ if $e$ is live with a probability of }\puv, \\
			g, &\text{ if $e$ is blocked with a probability of }1-\puv.
		\end{cases}
	\end{equation*}
	Then, we can rewrite $\sigma(A\cup \{e\}, S)$ as
	\begin{equation*}
		\sigma(A\cup \{e\}, S) = \sum_{g \sim G(A)}\Pr[g]\cdot \big(\puv\cdot  I_{g\cup\{e\}}(S)+(1-\puv)\cdot I_{g}(S)\big).
	\end{equation*}
	When edge $e$ is live in $g^\prime$, $v$ is activated by $S$, which can be viewed as a seed node, indicating that $I_{g\cup\{e\}}(S)=I_{g}(S\cup\{v\})$. Therefore,
	\begin{align*}
		\sigma(A\cup \{e\}, S) &= \!\!\sum_{g \sim G(A)}\!\!\Pr[g]\cdot \big(\puv\cdot  I_{g}(S\cup\{v\})+(1-\puv)\cdot I_{g}(S)\big)\\
		&=\puv\cdot \sigma(A,S\cup\{v\})+(1-\puv)\cdot \sigma(A,S).
	\end{align*}
	This completes the proof.
\end{proof}
%instead of directly estimating  of every edge $e$ that requires generating distinct RR sets for different $e$, we can estimate the marginal influence gain of the corresponding node $v$ upon $S$ that can use the same group of RR sets for different $v$ as $G(A)$ remains unchanged.
{%Original version
    }
{%LT
} 

Based on Lemma~\ref{lem:link_marginal_gain}, to perform greedy selection, we can estimate $\sigma(A \cup \{e\}, S)$ via a combination of $\widehat{\sigma}(A, S\cup \{v\})$ and $\widehat{\sigma}(A, S)$. Note that under given $A$ and $S$, an edge $e$ maximizing $\puv \cdot \big(\widehat{\sigma}(A, S\cup \{v\}) -\widehat{\sigma}(A, S)\big)$ also maximizes $\puv \cdot \widehat{\sigma}(A, S\cup \{v\}) +(1-\puv)\cdot \widehat{\sigma}(A, S)$, since the additive term $\widehat{\sigma}(A, S)$ in the latter is independent of $e$. For convenience, in each iteration, we select the edge with the largest value of $\puv \cdot \big(\widehat{\sigma}(A, S\cup \{v\}) -\widehat{\sigma}(A, S)\big)$. Given a fixed node set $S$, for any node $v$ and any RR set $R$, let $\Indicate_R(v)$ be an indicator function such that
\begin{equation}
	\Indicate_R(v) = 
	\begin{cases}
		1, &\text{$R$ is covered by $v$ but not by $S$,}\\
		0, &\text{otherwise}.
	\end{cases}
\end{equation}   
In addition, given a set $\rrsets$ of random RR sets, denote by $\Delta_\rrsets(v)$ the number of RR sets in $\rrsets$ covered by $v$ but not covered by $S$, \ie~$\Delta_\rrsets(v)=\sum_{R\in \rrsets}\Indicate_R(v)=\Lambda_\rrsets(S\cup\{v\})-\Lambda_\rrsets(S)$. Then, for a set $\rrsets$ of random RR sets generated on the augmented graph $G(A)$, $\frac{n}{|\rrsets|}\Delta_\rrsets(v)$ is an unbiased estimate of $\sigma(A, S\cup \{v\}) -\sigma(A, S)$. Putting it together yields that $\frac{n\puv}{|\rrsets|}\Delta_\rrsets(v)$ is an unbiased estimate of $\sigma(A\cup\{e\}, S) -\sigma(A, S)$.

\subsection{Accelerating RR Sets Generation}\label{subsec:acceleration}
For each RR set $R$, by definition, $\Indicate_R(\cdot)=0$ if $R$ is covered by $S$. This implies that during the generation process of $R$, if any node in $S$ is added to $R$, we can stop the process immediately to produce a truncated RR set $R^\prime$. Apparently, $R^\prime$ is also covered by $S$. As a result, we still have $\Indicate_{R^\prime}(\cdot)=\Indicate_R(\cdot)=0$ while generating $R^\prime$ accelerates the sampling process. 

Moreover, in each iteration, RR sets are generated based on current augmented graph $G(A)$. After selecting a new edge $e$, the graph is further augmented from $G(A)$ to $G(A\cup\{e\})$. A naive way is to generate new RR sets from scratch for each iteration, which is time consuming. Instead, we propose to update the RR sets incrementally that can save overheads significantly. Specifically, for each RR set $R$ generated on $G(A)$, we just need to check whether $u$ should be inserted into $R$ after selecting a new edge $e=\anglePair{u,v}$, which suffices a correct update of $\Indicate_R(\cdot)$. That is, we insert $u$ into $R$ if all of the following three conditions are met.
\begin{enumerate}
	\item $R$ is not covered by $S$;
	\item $v$ is in $R$;
	\item $e=\anglePair{u,v}$ is live with a probability of $\puv$.
\end{enumerate} 
Otherwise, we simply retain the RR set $R$ with no changes. To explain, if all the aforementioned conditions are met, $u\in S$ with reach $v\in R$ after adding $e=\anglePair{u,v}$. Since the current $R$ is not covered by $S$ before adding $e$, we should insert $u$ into $R$ to ensure that $R$ is covered by $S$ after adding $e$. On the other hand, if $R$ is already covered by $S$, no further actions are needed as it always hold that $\Indicate_R(\cdot)=0$. Meanwhile, if $v\notin R$ and $R$ is not covered by $S$, $u$ still cannot reach any nodes in $R$ after adding $e$. Finally, if $e=\anglePair{u,v}$ is blocked with a probability of $1-\puv$, the live-edge graph remains unchanged, so as to the RR set $R$.

\subsection{Putting It Together}\label{subsec:alg}
We instantiate \greedy with our newly developed estimator via RR sets in an accelerated generation way, namely \textbf{A}ugmenting the \textbf{I}nfluence of \textbf{S}eeds (\ais) with psudocode given in Algorithm~\ref{alg:AIS}. 

\begin{algorithm}[!tbp]
	\caption{\ais}\label{alg:AIS}
	\begin{algorithmic}[1]
%		\Require $\mathcal{R}, S, k, E_\candidates, \varepsilon, \delta$
		\Require $G, S, \Ecand, k, \varepsilon, \delta$;
		\Ensure An size-$k$ edge set $A$;
		\State $\delta\gets \frac{\delta}{k\abs{\Ecand}}$, $\lambda \gets \frac{\varepsilon / k}{2 + \varepsilon / k}$;\label{algline:scaling}
		\State Generate random RR sets until $\Lambda_\rrsets(S)\geq\frac{2(1+\lambda)(1 + \frac{\lambda}{3})\log (2/\delta)}{\lambda^{2}}$;\label{algline:threshold}
            \State Initialize $\Delta_\rrsets(v)\gets 0$ for each $v\in V$;
		\For{each $R\in \rrsets$}\label{algline:Delta-start}
		\State $\Pi(R)\gets 1$; \Comment{Indicate whether $R$ is covered by $S$}
		\If{$R$ is not covered by $S$}
		\State $\Pi(R)\gets 0$;
		\For{each $v\in R$}
		\State $\Delta_\rrsets(v)\gets \Delta_\rrsets(v)+1$;\label{algline:Delta-end}
		\EndFor
		\EndIf
		\EndFor
%		\State Initialize $\Lambda_\rrsets(S)\gets 0$;	
		\State $A \gets \emptyset$; 
		\For{$i\gets 1$ to $k$}
		\State $e^* \gets \operatorname*{arg\,max}_{e \in \Ecand \bksls A} \puv \cdot \Delta_\rrsets(v)$; \label{eq:marg_gain_in_code}\Comment{Denote $e^\ast=\anglePair{u^\ast,v^\ast}$}
		\State $A \gets A \cup \{e^*\}$;
        
		\For{each $R\in \rrsets$}\Comment{Update RR sets in a soft way}\label{algline:update-start}
            \If{$\Pi(R)=0 \And v^\ast \in R \And U(0,1)\leq p_{u^\ast,v^\ast}$}
            \State $\Pi(R)\gets 1$;
            \For{each $v\in R$}
		\State $\Delta_\rrsets(v)\gets \Delta_\rrsets(v)-1$;\label{algline:update-end}
            \EndFor
            \EndIf
		\EndFor
		% \For{$R$ in the RR sets covered by $v$} \label{alg:edge_selection_update_begin}
		% \If{cov\_by\_S[$R$]}
		% \State \textbf{continue}
		% \EndIf
		% \If{$\texttt{rand()} \leq \puv$}
		% \State cov\_by\_S[$R$] $\gets$ true
		% \For{$z$ in $R$}
		% \State m\_cov[$z$]$--$  \label{alg:edge_selection_update_end}
		% \EndFor
		% \EndIf
		% \EndFor
		\EndFor
		\State \Return $A$;
	\end{algorithmic}
\end{algorithm}

To obtain a multiplicative $\lambda$-error estimate $\widehat{\sigma}(A, S)$ of $\sigma(A, S)$, we employ the generalized stopping rule proposed by \citet{zhu20212}. That is, we keep generating a set $\rrsets$ of random RR sets on $G$ until the coverage $\Lambda_\rrsets(S)$ of $S$ in $\rrsets$ exceeds $\frac{2(1+\lambda)(1 + \frac{\lambda}{3})\log (2/\delta)}{\lambda^{2}}$ (Line~\ref{algline:threshold}), where $\delta\in (0,1)$ is a predefined threshold on failure probability. Then, according to the result derived by \citet{zhu20212}, $\frac{n}{|\rrsets|}\Lambda_\rrsets(S)$ is a $(\lambda, \delta)$-estimate of $\sigma(S)$, \ie
\begin{equation}\label{eq:bounds}
	\Pr\big[(1-\lambda)\sigma(S)\leq \tfrac{n}{|\rrsets|}\Lambda_\rrsets(S)\leq (1+\lambda)\sigma(S)\big]\geq 1-\delta.
\end{equation}
Meanwhile, due to the law of large numbers, $\frac{n}{|\rrsets|}\Lambda_\rrsets(S\cup\{v\})$ is also a $(\lambda, \delta)$-estimate of $\sigma(S\cup\{v\})$ as $\Lambda_\rrsets(S\cup\{v\})\geq \Lambda_\rrsets(S)$ for any $v$. Similarly, when we update RR sets incrementally, since $\Lambda_\rrsets(S)$ increases, we can also produce a $(\lambda, \delta)$-estimate $\widehat{\sigma}(A,S)$ of $\sigma(A,S)$ for any $A$ via the updated $\rrsets$. As a consequence, with a high probability, we have
\begin{align*}
    \widehat{\sigma}(A\cup\{e\},S)
    &=\puv\widehat{\sigma}(A,S\cup\{v\})+(1-\puv)\widehat{\sigma}(A,S)\\
    &\leq (1+\lambda) \big(\puv{\sigma}(A,S\cup\{v\})+(1-\puv){\sigma}(A,S)\big)\\
    &=(1+\lambda)\sigma(A\cup\{e\},S).
\end{align*}
Using an analogous analysis, with a high probability, we can get that $\widehat{\sigma}(A\cup\{e\},S)\geq (1-\lambda)\sigma(A\cup\{e\},S)$. Now, we can devise an desired estimate $\widehat{\sigma}(A\cup\{e\},S)$ of $\sigma(A\cup\{e\},S)$.

The \greedy algorithm selects edge $e$ with the largest $\widehat{\sigma}(A\cup\{e\},S)$ in each iteration. By definition,
\begin{equation*}
    \widehat{\sigma}(A\cup\{e\},S)=\frac{n\puv}{|\rrsets|}\Delta_\rrsets(v)+\widehat{\sigma}(A,S).
\end{equation*}
Therefore, it is equivalent to choose $e$ that maximizes $\puv\Delta_\rrsets(v)$ (Line~\ref{eq:marg_gain_in_code}). Before we select any edges, we compute $\Delta_\rrsets(v)$ for each $v$ on the original RR sets (Lines~\ref{algline:Delta-start}--\ref{algline:Delta-end}). In each iteration, after selecting an edge $e^\ast=\anglePair{u^\ast,v^\ast}$, for each $R\in \rrsets$, we perform a soft update if all the three conditions given in Section~\ref{subsec:acceleration} are met (Lines~\ref{algline:update-start}--\ref{algline:update-end}). That is, instead of physically inserting $u^\ast$ into $R$, we set the value of the indicator function $\Pi(R)$ to $1$ that indicates $R$ is now covered by $S$, and reduce $\Delta_\rrsets(v)$ by $1$ for every $v\in R$. Finally, after a subset $A$ of $k$ edges are selected, we terminate the algorithm and return $A$.

{%IMM sampling
% If $n\cdot F_{\rrsets}(S) \geq (1 + \varepsilon^\prime)\cdot x$, we directly use $x$ as the lower bound of $\sigma (S)$. Then $\abs{\rrsets}$ is naturally large enough to satisfy Equation \ref{eq:rrset_num}. Next, we illustrate that the lower bound $x$ is close to $\sigma(S)$. Lemma \ref{lem:IMM_sample_3} states that if $x \leq \frac{\sigma(S)}{c\cdot (1 + \varepsilon)^2}$, it is not likely that the algorithm will proceed to the next iteration.

% \input{sections/lemmas/lem_IMM_sample_3}
% \input{sections/lemmas/lem_IMM_sample_4}
}

\section{Theoretical Analysis}\label{sec:theory}
In this section, we conduct a theoretical analysis of the proposed \ais algorithm, and make a comparison with \mcgreedy.
\subsection{Analysis of \ais}
We present the main theoretical result of \ais as follows.
\begin{theorem}\label{thm:IMA_IC_guarantee_time}
    Algorithm \ref{alg:AIS} returns a $(\greedyFactor)$-approximate solution to the IMA problem with a probability of at least $1-\delta$, and runs in $O\big(\frac{k^2(m+n)\log (n/\delta)\sigma(\{v^\circ\})}{\varepsilon^{2} \sigma(S)}+k\abs{\Ecand}\big)$ time, where $v^\circ$ is the most influential node in $G$.
\end{theorem}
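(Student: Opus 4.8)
The plan is to establish the two assertions of Theorem~\ref{thm:IMA_IC_guarantee_time} separately: the approximation guarantee and the running time. For the approximation guarantee I would work with the gain function $f(A) = \sigma(A,S) - \sigma(S)$, which is monotone and submodular in $A$ following~\citet{d2019recommending} and satisfies $f(\emptyset)=0$, so that its maximizer coincides with $A^*$ and Lemma~\ref{thm:mc_err} becomes applicable with $\lambda = \frac{\varepsilon/k}{2+\varepsilon/k}$ as set on Line~\ref{algline:scaling}. The greedy selection on Line~\ref{eq:marg_gain_in_code} maximizes $\puv\,\Delta_\rrsets(v)$, which by Lemma~\ref{lem:link_marginal_gain} is exactly a greedy step on the estimate $\widehat{\sigma}(A\cup\{e\},S)$, i.e.\ on $\widehat{f}$. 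Hence it suffices to show that $\widehat{f}$ is a multiplicative $\lambda$-error estimate of $f$ over all sets the greedy examines, and then invoke Lemma~\ref{thm:mc_err}.

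To control the estimator, I would first argue that the stopping rule on Line~\ref{algline:threshold}, via the generalized stopping rule of \citet{zhu20212} and Equation~\eqref{eq:bounds}, makes $\frac{n}{\abs{\rrsets}}\Lambda_\rrsets(S)$ a $(\lambda,\delta')$-estimate of $\sigma(S)$, where $\delta' = \delta/(k\abs{\Ecand})$ is the scaled failure probability. Since $\Lambda_\rrsets(S\cup\{v\}) \geq \Lambda_\rrsets(S)$ also exceeds the threshold, the same estimate is $(\lambda,\delta')$-accurate for $\sigma(S\cup\{v\})$, and the analogous statement holds for the incrementally updated $\rrsets$ representing any augmented graph $G(A)$ encountered. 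A union bound over the $k\abs{\Ecand}$ estimates queried during the $k$ greedy rounds then makes all of them simultaneously accurate with probability at least $1-\delta$; combining the accurate estimates of $\sigma(A,S\cup\{v\})$ and $\sigma(A,S)$ through Lemma~\ref{lem:link_marginal_gain} yields a $(1\pm\lambda)$-accurate $\widehat{\sigma}(A\cup\{e\},S)$, exactly as displayed before the theorem. Feeding this into Lemma~\ref{thm:mc_err} delivers the $(\greedyFactor)$-approximation.

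For the running time I would split the cost into RR-set construction, greedy scanning, and incremental updates. The stopping threshold is $\Theta(\log(1/\delta')/\lambda^2)$, and since $\lambda = \Theta(\varepsilon/k)$ and $\log(1/\delta') = O(\log(n/\delta))$ (as $k$ and $\abs{\Ecand}$ are polynomial in $n$), the number of \emph{covered} RR sets needed is $O(k^2\log(n/\delta)/\varepsilon^2)$. Because a random RR set covers $S$ with probability $\sigma(S)/n$ by Lemma~\ref{lem:est_spread}, the expected number generated is $\abs{\rrsets} = O\big(\frac{k^2 n\log(n/\delta)}{\varepsilon^2 \sigma(S)}\big)$. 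The expected work to build one RR set is dominated by the nodes and in-edges explored; bounding $\sum_{u} \sigma(\{u\}) \le n\,\sigma(\{v^\circ\})$ and $\sum_u d_{in}(u)\,\sigma(\{u\}) \le m\,\sigma(\{v^\circ\})$ gives an expected per-set cost of $O\big(\frac{(m+n)\sigma(\{v^\circ\})}{n}\big)$, so the total construction cost is $O\big(\frac{k^2(m+n)\log(n/\delta)\sigma(\{v^\circ\})}{\varepsilon^2 \sigma(S)}\big)$, the first term. The $k$ rounds each scan $\Ecand$, contributing $O(k\abs{\Ecand})$, while the soft-update loop touches each RR set's nodes at most once overall (since $\Pi(R)$ flips from $0$ to $1$ only once), so with an inverted index from nodes to RR sets its cost is amortized into the construction term.

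The hard part will be the approximation argument rather than the timing: I must ensure the multiplicative accuracy holds \emph{simultaneously} for the adaptively (data-dependently) chosen sets $A_1,\dots,A_k$ produced by the greedy, and that accuracy of $\widehat{\sigma}(\cdot,S)$ on the total spread genuinely translates into accuracy of $\widehat{f}$ on the gain function required by Lemma~\ref{thm:mc_err}. The union bound over a fixed pool of $k\abs{\Ecand}$ queries addresses the adaptivity, but care is needed because a small marginal gain is estimated from the difference of two coverages whose concentration is only guaranteed through the threshold on $\Lambda_\rrsets(S)$; reconciling this with the multiplicative-error hypothesis of Lemma~\ref{thm:mc_err} is the crux of the proof.
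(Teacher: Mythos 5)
Your time-complexity analysis and your union-bound structure are essentially the paper's: the stopping rule of Line~\ref{algline:threshold} with $\lambda=\Theta(\varepsilon/k)$ and the scaled failure probability $\delta/(k\abs{\Ecand})$ from Line~\ref{algline:scaling} give $O\big(\frac{k^2 n\log(n/\delta)}{\varepsilon^2\sigma(S)}\big)$ RR sets at expected cost $O\big(\frac{(m+n)\sigma(\{v^\circ\})}{n}\big)$ each, the greedy scans contribute $O(k\abs{\Ecand})$, and the soft updates are charged against the total size of the generated RR sets because $\Pi(R)$ flips from $0$ to $1$ at most once. All of this matches the paper's proof.

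The genuine gap is in the approximation half, and you have named it yourself without resolving it. By defining $f(A)=\sigma(A,S)-\sigma(S)$ so as to satisfy the hypothesis $f(\emptyset)=0$ of Lemma~\ref{thm:mc_err}, you obligate yourself to show that $\widehat{f}$ is a \emph{multiplicative} $\lambda$-error estimate of the \emph{gain}. But the stopping rule only delivers $(1\pm\lambda)$ multiplicative accuracy for the total spreads $\sigma(A,S\cup\{v\})$ and $\sigma(A,S)$; subtracting two such estimates leaves an additive error of order $\lambda\,\sigma(A,S)$, which can dominate a small gain, so no multiplicative guarantee on $f$ follows and Lemma~\ref{thm:mc_err} cannot be invoked on the gain function. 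Your closing paragraph concedes this is ``the crux'' and leaves it open, so the proposal as written does not close the proof. The paper avoids the obstacle by applying Lemma~\ref{thm:mc_err} with $f(A)=\sigma(A,S)$ itself---precisely the quantity for which the RR-set estimator \emph{is} multiplicatively accurate---after observing that maximizing $\puv\Delta_\rrsets(v)$ on Line~\ref{eq:marg_gain_in_code} is equivalent to maximizing $\widehat{\sigma}(A\cup\{e\},S)=\frac{n\puv}{|\rrsets|}\Delta_\rrsets(v)+\widehat{\sigma}(A,S)$, since the additive term is common to all candidate edges; the theorem's guarantee is stated relative to $\sigma(A^*,S)$, so this suffices. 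The condition $f(\emptyset)=0$ that motivated your detour is in fact harmless here: the error-propagation recursion behind Lemma~\ref{thm:mc_err} uses only monotonicity and submodularity, and unrolling it with a nonnegative offset $f(\emptyset)=\sigma(S)\geq 0$ only adds a nonnegative term to the lower bound, so the conclusion $f(A^g)\geq(\greedyFactor)f(A^*)$ survives for $\sigma(\cdot,S)$ directly. Had you verified this instead of shifting to the gain, your argument would coincide with the paper's and be complete.
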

% \vspace{-5pt}
In what follows, we show the approximation guarantee and time complexity separately.

\spara{Approximation Guarantee} 
As shown in Section~\ref{subsec:alg}, \ais greedily selects the edge $e$ with the largest $\widehat{\sigma}(A\cup\{e\}, S)$ that is a multiplicative $\lambda$-error estimate of the augmented influence spread $\sigma(A\cup\{e\}, S)$ with a probability of at least $1-\delta$. We examine at most $\abs{\Ecand}$ edges in each iteration and there are $k$ iterations. By union bound, with a probability of at lest $1-k\abs{\Ecand}\delta$, all $\widehat{\sigma}(A\cup\{e\}, S)$'s are multiplicative $\lambda$-error estimates. Therefore, by scaling $\delta$ to $\frac{\delta}{k\abs{\Ecand}}$ (Line~\ref{algline:scaling} of Algorithm~\ref{alg:AIS}), according to Lemma~\ref{thm:mc_err}, Algorithm \ref{alg:AIS} returns a $(\greedyFactor)$-approximate solution to the IMA problem with a probability of at least $1-\delta$.

% For the approximation ratio, following Lemma \ref{thm:mc_err}, we need to show that the RR sets always provide a multiplicative $\lambda$-error estimate for the augmented influence spread. For the ease of notation, let $\fracR{\rrsets}{v|S} = \fracR{\rrsets}{S\cup \{v\}} - \fracR{\rrsets}{S}$ and rewrite $\sigma(\cdot, S)$ as $\mu (\cdot)$ since $S$ is fixed here. Define $\hat{\mu}$ such that for a given $e=\anglePair{u, v}$,
% \begin{equation}
%         \hat{\mu} (A \cup \{e\}) = n(\fracR{\rrsets}{S} + \puv \fracR{\rrsets}{v|S}).
% \end{equation}
% \xiaolong{
% \begin{equation}
%     \hat{\mu}(A\cup\{e\})=\frac{n}{|\rrsets|} \left(\puv \Delta_\rrsets(v) + \covSets{\rrsets}{S} \right).
% \end{equation}
% }

% Here we introduce Lemma \ref{lem:lambda_err_for_all}, stating that with high probability, $\hat{\mu}(\cdot)$ provide a multiplicative $\lambda$-error estimate for $\mu(\cdot)$ in every iteration.
% \input{sections/lemmas/lem_estimation_all_rounds}

\spara{Time Complexity} For the generation of RR sets, according to \eqref{eq:bounds}, a total of $O\big(\frac{k^2n\log (n/\delta)}{\varepsilon^2\sigma(S)}\big)$ RR sets are generated, and generating each RR set takes $O\big(\frac{m}{n}\sigma(\{v^\circ\})\big)$ time~\cite{tang2014influence}. Thus, the time complexity for generating RR sets is $O\big(\frac{k^2(m+n)\log (n/\delta)\sigma(\{v^\circ\})}{\varepsilon^2\sigma(S)}\big)$. For the selection of edge, a total of $O(k\abs{\Ecand})$ edges are examined. For the update of $\Delta_\rrsets(v)$, the number of updates is upper bounded by the total size of the initially generated RR sets which is $O\big(\frac{k^2(m+n)\log (n/\delta)\sigma(\{v^\circ\})}{\varepsilon^2\sigma(S)}\big)$. Therefore, \ais runs in $O\big(\frac{k^2(m+n)\log (n/\delta)\sigma(\{v^\circ\})}{\varepsilon^2\sigma(S)}+k\abs{\Ecand}\big)$ time.

With a reasonable assumption that the influence of any singleton node is smaller than that of the seed set $S$, \ie~$\sigma(\{v^\circ)\}\leq \sigma(S)$, \ais runs in $O\big(\frac{k^2(m+n)\log (n/\delta)}{\varepsilon^2}+k\abs{\Ecand}\big)$ time. 

% Furthermore, there are some interesting facts worth mentioning. As we adopt the simple greedy framework and carry out estimations in every iteration, the $k^2$ term in time complexity seems inevitable since the estimation errors will accumulate. If the estimator remains unchanged, maybe following the paradigm of traditional IM problem could bring lower time complexity. Another observation is that the time cost will decrease with the increase of $\sigma(S)$. That is, if the influence spread of the seed set is large, we will need less time to obtain the solution set. In reality, this is the normal case as it is not likely that a company will choose the seed users with little influence spread.

\spara{Remark} The state-of-the-art sampling method proposed by \citet{guo2022influence} takes $O\big((1+\frac{\mu}{n})\sigma(\{v^\circ\})\big)$ time for generating one random RR set, where $\mu=\sum_{\anglePair{u,v}\in E}\puv$ is that total propagation probability of $m$ edges. Applying it to the IMA problem, \ais runs in $O\big(\frac{k^2(\mu+n)\log (n/\delta)\sigma(\{v^\circ\})}{\varepsilon^2\sigma(S)}+k\abs{\Ecand}\big)$ time. Assuming that $\sigma(\{v^\circ)\}\leq \sigma(S)$, \ais runs in $O\big(\frac{k^2(\mu+n)\log (n/\delta)}{\varepsilon^2}+k\abs{\Ecand}\big)$ time.

\subsection{Comparison with \mcgreedy}
Similar to the analysis for \ais, when generating $O\big(\frac{k^2n\log (n/\delta)}{\varepsilon^2\sigma(S)}\big)$ Monte-Carlo simulations for estimating $\sigma(A,S)$, \mcgreedy achieves an approximation ratio of $(\greedyFactor)$ with a high probability of $1-\delta$. For each Monte-Carlo simulation, it takes $O(m)$ time. Meanwhile, a total number $O(k\abs{\Ecand})$ of estimations are needed. Therefore, the time complexity of \mcgreedy is $O\big(\frac{k^3mn\abs{\Ecand}\log (n/\delta)}{\varepsilon^2\sigma(S)}\big)$. This implies that our \ais algorithm improves the time complexity of \mcgreedy by a multiplicative factor of $\frac{kn\abs{\Ecand}}{\sigma(\{v^\circ\})}$. Notice that $\abs{\Ecand}$ can be as large as $O(\abs{S} n)$, which means we can improve the time complexity by a factor of $\frac{k\abs{S}n^2}{\sigma(\{v^\circ\})}$.

% Using $r$ to denote the number of simulations needed to estimate the expected influence spread of a seed set, the running time of MC-Greedy is $O(km\abs{\Ecand}r)$. Although D'Angelo et al. \cite{d2019recommending} proposed the naive greedy algorithm, they did not analyze the relationship between the simulation rounds $r$ and error term $\varepsilon$. As the setting of $r$ is directly related to the error term $\lambda$ of a single estimation, we first investigate the relationship between $\lambda$ and $\varepsilon$, given by Theorem \ref{thm:mc_err}, which tells us that $\lambda$ should be at most $\frac{\varepsilon / k}{2 + \varepsilon / k}$. We then use this upper bound to compute the minimum $r$ required.

% \input{sections/lemmas/thm_mc_greedy}

\section{Experiments}\label{sec:exp}
\begin{table}[!tbp]
\centering%把表居中
\caption{Datasets ($K=10^3$, $M=10^6$, $B=10^9$)}%标题
\label{table:datasets}
\begin{tabular}{cccc}
\toprule
Name & $n$ & $m$ & Avg. Degree \\
\midrule
GRQC\ \ & 5.2K & 14.5K & 5.58\\
NetHEPT\ \ & 15.2K & 31.4K & 4.1\\
Epinions\ \ & 75.9K & 508.8K & 13.4\\
DBLP\ \ & 317K & 1.05M & 6.1\\
Orkut\ \ & 3.1M & 234.2M & 76.3\\
Twitter\ \ & 41.7M & 1.5B & 70.5\\
\bottomrule
\end{tabular}
\end{table}

\subsection{Experiment Settings}
This section evaluates the empirical performance of the proposed algorithm. All experiments are conducted on a linux machine with Intel Xeon(R) 8377C@3.0GHz and 512GB RAM. All algorithms are implemented by C++ with O3 optimization.

\spara{Datasets}
We conduct our experiments on 6 datasets, the information of which is presented in Table \ref{table:datasets}. All datasets are publicly available \cite{snapnets, Kwak10www}. Among them, Twitter is one of the largest dataset used in the field of influence maximization with up to millions of nodes and billions of edges.

\spara{Parameters}
The number of seed users $\abs{S}$ and the number of target edge set $k$ is both fixed as $50$. The failure probability $\delta$ is $0.001$. As this manuscript is targeted at the IC model, we use the IC model as the diffusion model. The propagation probability $\puv$ of each edge is set to be the inverse of $v$'s in-degree, which is a widely-adopted setting \cite{chen2010scalable, huang2017revisiting, kempe2003maximizing, tang2014influence, tang2015influence, tang2018online}. To compare the quality of the solution sets, the expected influence spread is computed by sampling RR sets to guarantee a $(\varepsilon, \delta)$-estimation. In each of our experiments, we run each method for $5$ times and report the average results.

In the first set of experiments, we randomly select $10000$ edges to construct the candidate set $\Ecand$ and the seed set is selected at random. This setting brings convenience for the implementation of \mcgreedy, since when $\abs{\Ecand}$ is large, \mcgreedy needs to traverse them all and simulate the diffusion process, which is prohibitive.

In the rest of experiments for larger datasets, to demonstrate the scalability of the proposed algorithm, the candidate edge set $\Ecand$ contains all possible edges, which means that the size of $\Ecand$ is $O (\abs{S} n)$. And the seed set is selected by the IMM \cite{tang2015influence} algorithm, conforming the real scenario where we will select influential nodes as seed users.

\spara{Candidate Edges Generation}
Every candidate edge $e=\anglePair{u,v}$ contains a seed node $u \in S$ and an ordinary node $v \in V\bksls S$. Considering both the probability of $u$ activating its out-neighbors and the probability of $v$ being activated, we heuristically assign the propagation probability $\puv$ as the mean value of $u$'s average out-degree and $v$'s average in-degree.

{
% \begin{equation*}
%     \bar{w}_{in} = \left\{
%     \begin{aligned}
%     &\texttt{rand}(), \quad \abs{N_{in}(v)} = 0\\
%     & w_{in}(v) / \abs{N_{in}(v)}, else
%     \end{aligned}
%     \right.
% \end{equation*}

% \begin{equation*}
%     \bar{w}_{out} = \left\{
%     \begin{aligned}
%     &\texttt{rand}(), \quad \abs{N_{out}(u)} = 0\\
%     & w_{out}(u) / \abs{N_{out}(u)}, else
%     \end{aligned}
%     \right.
% \end{equation*}

% \begin{equation*}
%     \puv = (\bar{w}_{in} + \bar{w}_{out}) / 2.
% \end{equation*}
% The edge $\anglePair{u,v}$ will be added as a candidate edge.
}

\spara{Baselines} We compare \ais with the following 7 baselines.
\begin{itemize}
    \item \RAND. Choose $k$ edges from $\Ecand$ uniformly at random.
    \item \OUTDEG. Choose the edges that connect to the nodes with highest out-degree.
    \item \PROB. Choose $k$ edges with highest probability.
    \item \SINF. Using the RR sets generated by Algorithm \ref{alg:AIS}, find top-$k$ nodes with highest marginal coverage, and select the edges pointing to them. The RR sets will not be updated after selection.
    \item \aisp. \aisp means \ais without considering the probability. Using the RR sets generated by Algorithm \ref{alg:AIS}, for every round, select the edge pointing to the node with the highest marginal coverage. The RR sets are updated after each selection.
    \item \aisu. \aisu means \ais without updating the RR sets. Using the RR sets generated by Algorithm \ref{alg:AIS}, for every round, compute the marginal gain by line \ref{eq:marg_gain_in_code} of Algorithm \ref{alg:AIS} but the RR sets will not be updated after each selection.
    \item \mcgreedy. The algorithm proposed in \cite{d2019recommending}. This algorithm is implemented with CELF \cite{leskovec2007cost} in our experiments.
\end{itemize}

\subsection{Comparison with \mcgreedy}
The first experiment set focuses on the comparison with baselines including \mcgreedy. Here we choose two relatively small datasets GRQC and NetHEPT to implement the algorithms. As the \mcgreedy algorithm suffers from high computation cost, if we traverse all $O(\abs{S} n)$ candidate edges, the time cost is unacceptable. Therefore, here we fix the size of $\Ecand$ as $10000$, and the number of Monte Carlo simulations is fixed to $r = 10000$. The seed users are selected uniformly at random. This is because if the influence of the seed users is already very large, it is harder to distinguish two edges as their marginal gain may be small. Then we need to increase the number of Monte Carlo simulations to obtain more accurate results, which is not acceptable. We adopt this setting just to show that our algorithms' efficacy is comparable to the state-of-the-art \mcgreedy algorithm \cite{d2019recommending}.

\begin{figure}[!t]
\centering
\begin{small}
\includegraphics[scale=1]{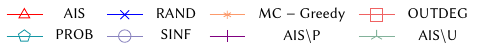}
\\[-\lineskip]
\vspace{-1mm}
\subfloat[{\em GRQC}]{
\includegraphics[scale=1]{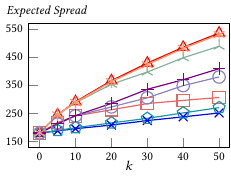}
\hspace{0mm}
}%
\subfloat[{\em NetHEPT}]{
\includegraphics[scale=1]{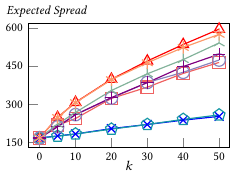}\hspace{0mm} \label{fig:nethept_k50_randseed_inf}
}%
\vspace{-3mm}
\end{small}
\caption{Results on GRQC and NetHEPT ($\varepsilon=0.5$).} \label{fig:grqc_nethept_k50_randseed_inf}
% \vspace{-2mm}
\end{figure}

\begin{figure}[!t]
\centering
\begin{small}
\includegraphics[scale=1]{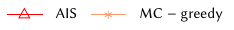}
\\[-\lineskip]
\vspace{-1mm}
\subfloat[{\em GRQC}]{
\includegraphics[scale=1]{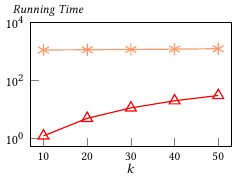}
}%
\subfloat[{\em NetHEPT}]{
\includegraphics[scale=1]{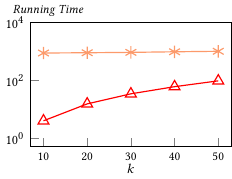}
}%
\vspace{-3mm}
\end{small}
\caption{Running time on GRQC and NetHEPT} \label{fig:fig_grqc_nethept_k50_randseed_time_k}
% \vspace{-2mm}
\end{figure}

From Figure ~\ref{fig:grqc_nethept_k50_randseed_inf}, We can see that the performance of \ais and \mcgreedy are both effective, outperforming other heuristic baselines. This is consistent with our analysis, since they both follow the same framework and the only difference is the estimator.
As for the running time, \ais is faster than \mcgreedy. \mcgreedy takes about one thousand seconds to return a solution while \ais only costs about half a minute. Figure ~\ref{fig:fig_grqc_nethept_k50_randseed_time_k} shows that although the setting is to the advantage of \mcgreedy, \ais is faster than \mcgreedy by about one order of magnitude when $k=50$. Notice that the time cost of \ais increases with $k$, conforming our analysis in Section \ref{sec:theory}, while \mcgreedy's remains still. This is because we fix the value of $r$ for different $k$. If we want the \mcgreedy algorithm to obtain the same guarantee, the time cost will also increase with $k$ and be significantly larger than \ais's. In addition, when we select a more influential seed set, the time cost of \mcgreedy will also be higher. This is because in each iterations, we need to simulate the influence propagation process starting from these influential users. Due to the large influence of the seed set, it takes longer for the propagation process to stop. But it is not the case for our \ais algorithm. As analyzed before, \ais will cost less time when the expected spread of the seed set is larger. This claim is supported by the results in Table \ref{tab:run_time_diff_seed}. We can see that when selecting influential seeds, our \ais algorithm is faster than \mcgreedy by two to three orders of magnitude.

% Raw Data Files
% \input{figures/_tex/fig_MCG_time_r}
% \input{figures/_tex/fig_MCG_inf_r}
% \input{figures/_tex/fig_inf_spread_four_datasets}
% \input{figures/_tex/fig_four_runtime_k}
% \input{figures/include_files/mcg_time_r_grqc_nethept}

\begin{table}[!t]
\caption{Running time (sec.) for different seeding strategies}\label{tab:run_time_diff_seed}
% \small
\begin{tabular}{ccccc}
\toprule
\multirow{2}{*}{Seeding Strategy }                 & \multicolumn{2}{c}{GRQC}                    & \multicolumn{2}{c}{NetHEPT} \\ \cmidrule(lr){2-3} \cmidrule(lr){4-5}
   & RAND & {IMM} & RAND    & IMM   \\ \midrule
\mcgreedy{[}1000{]} & 310.738    & {1131.529}   & 304.410       & 1510.126     \\
\mcgreedy{[}2000{]} & 448.606    & {2177.472}     & 326.002       & 3024.026     \\
\mcgreedy{[}3000{]} & 504.800    & {3517.178}   & 404.653      & 4641.218     \\
\mcgreedy{[}4000{]} & 652.039    & {4861.590}   & 434.475      & 6418.404    \\
\mcgreedy{[}5000{]} & 724.280   & {6196.576}   & 640.814      & 7780.710    \\ \midrule
\ais                & 30.099    & {5.310}   & 96.7855       & 14.766      \\ \bottomrule
\end{tabular}
\vspace{-2mm}
\end{table}

\begin{figure*}[!t]
\centering
\begin{small}
\begin{tikzpicture}
    \begin{customlegend}[legend columns=7,
        legend entries={$\mathsf{AIS}$,$\mathsf{RAND}$, $\mathsf{OUTDEG}$, $\mathsf{PROB}$, $\mathsf{SINF}$, $\mathsf{AIS \bksls P}$, $\mathsf{AIS \bksls U}$},
        legend style={at={(0.45,1.35)},anchor=north,draw=none,font=\footnotesize,column sep=0.15cm}]
    \addlegendimage{smooth,color=red,mark=triangle, mark size=3}
    \addlegendimage{smooth,color=blue,mark=x, mark size=3}
    \addlegendimage{smooth,color=XL3,mark=square, mark size=3}
    \addlegendimage{smooth,mark=pentagon,color=BLUE1, mark size=3}
    \addlegendimage{smooth,,mark=o,color=PURPLE, mark size=3}
    \addlegendimage{smooth,,mark=+,color=violet, mark size=3}
    \addlegendimage{smooth,,mark=Mercedes star,color=GREEN, mark size=3}
    \end{customlegend}
\end{tikzpicture}
\\[-\lineskip]
\vspace{-1mm}
\subfloat[{\em NetHEPT}]{
\includegraphics{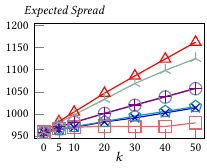}
}%
\subfloat[{\em Epinions}]{
\includegraphics{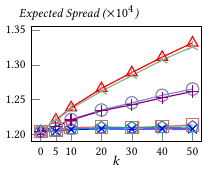}
}%
\subfloat[{\em DBLP}]{
\includegraphics{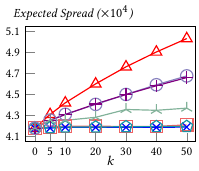}\label{fig:dblp}%
}%
\subfloat[{\em Orkut}]{
\includegraphics{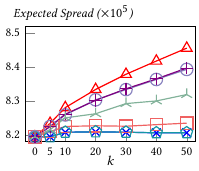}
}%
\subfloat[{\em Twitter}]{
\includegraphics{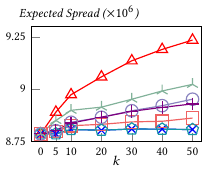}
}%
\vspace{-3mm}
\end{small}
\caption{Expected spread with varying $k$ ($\varepsilon=0.5$).} \label{fig:inf_spread_four_datasets}
% \vspace{-2mm}
\end{figure*}
\begin{figure*}[!t]
\centering
\begin{small}
\subfloat[{\em NetHEPT}]{
\includegraphics{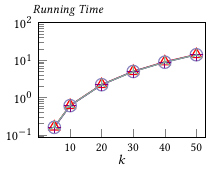}
}%
\subfloat[{\em Epinions}]{
\includegraphics{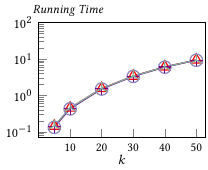}
}%
\subfloat[{\em DBLP}]{
\includegraphics{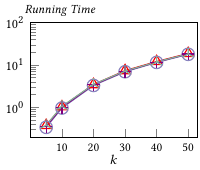}
}%
\subfloat[{\em Orkut}]{
\includegraphics{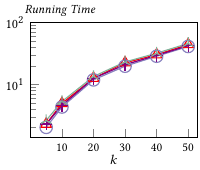}
}%
\subfloat[{\em Twitter}]{
\includegraphics{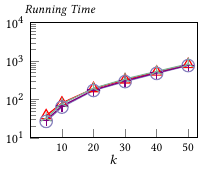}
}%
\vspace{-3mm}
\end{small}
\caption{Running time with varying $k$ ($\varepsilon=0.5$).} \label{fig:runtime_k_four_datasets}
% \vspace{-2mm}
\end{figure*}

\subsection{Results on Larger Datasets}
The second experiment set investigates the efficacy and efficiency of \ais on larger datasets. Here the seed users are selected by IMM algorithm \cite{tang2015influence}, which fits the real scenario where we will select influential nodes as seeds and the candidate edge set contains all possible edges in $(S\times V) \bksls E$.

First, we run the baseline algorithms on these datasets and compare their performance with \ais's. Figure ~\ref{fig:inf_spread_four_datasets} shows the expected spread with different values of $k$. It is obvious that \ais consistently outperforms other heuristic baselines on all datasets. Notice that selecting the edges by out-degree or probability barely brings any benefit. This is because the solution set will contain a large number of edges pointing to the same node. Imagine that we successfully recommend a seed user $u$ to an ordinary user $v$ and $u$ can influence $v$ with large probability. Then the addition of the edge $\anglePair{u,v}$ will make other edges pointing to $v$ less beneficial. This issue is also observed in \cite{coro2021link} and they empirically show that the influence spread will not be affected significantly even if we do not allow the addition of edges from different seeds to the same node. In contrast, \SINF, \aisp and \aisu perform better. Notice that \SINF and \textsf{AIS\tbksls P} basically achieve the same performance over four datasets. This is because their rationales are similar, which is to select influential nodes and avoid excess recommendations to the same target node. \SINF simply selects top-$k$ influential nodes to connect to the seeds, and \textsf{AIS\tbksls P} selects the most influential node by the up-to-date RR sets in each iteration. Next we look into the performance of \textsf{AIS\tbksls U}. In NetHEPT, Epinions and Twitter, \textsf{AIS\tbksls U} presents considerable efficacy. However, it fails to give good solutions for DBLP and Orkut. After investigating the solution sets, we find that \textsf{AIS\tbksls U} often provides excess recommendation to the same node since the marginal coverage array is not updated. That explains why sometimes the influence spread will not change much after adding new edges.

Figure ~\ref{fig:runtime_k_four_datasets} shows how the time cost changes with $k$. The running time of \ais increases with $k$, which is consistent with our analysis. Also, we observe that \ais, \textsf{AIS\tbksls P} and \textsf{AIS\tbksls U} share similar running time. The fact that \textsf{AIS\tbksls U} and \ais cost similar time indicates that the RR sets update process barely brings any extra cost. So although the we could only provide a loose bound on the time complexity for it, the RR sets update process will not cost much.

\begin{figure}[!t]
\centering
\begin{small}
% \begin{tikzpicture}
%     \begin{customlegend}[legend columns=2,
%         legend entries={$\mathsf{NetHEPT}$, $\mathsf{Epinions}$},
%         legend style={at={(0.45,1.35)},anchor=north,draw=none,font=\footnotesize,column sep=0.15cm}]
%     \addlegendimage{smooth,color=red,mark=triangle, mark size=3}
%     \addlegendimage{smooth,color=XL3,mark=square, mark size=3}
%     \end{customlegend}
% \end{tikzpicture}
\includegraphics[scale=1]{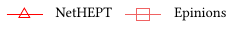}
\\[-\lineskip]
\vspace{-1mm}
% \begin{tikzpicture}
%     \begin{customlegend}[legend columns=3,
%         legend entries={$\mathsf{DBLP}$, $\mathsf{Orkut}$, $\mathsf{Twitter}$},
%         legend style={at={(0.45,1.35)},anchor=north,draw=none,font=\footnotesize,column sep=0.15cm}]
%     \addlegendimage{smooth,mark=pentagon,color=BLUE1, mark size=3}
%     \addlegendimage{smooth, mark=o,color=PURPLE, mark size=3}
%     \addlegendimage{smooth,,mark=Mercedes star,color=GREEN, mark size=3}
%     \end{customlegend}
% \end{tikzpicture}
\includegraphics[scale=1]{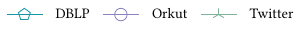}
\\[-\lineskip]
\vspace{-1mm}
\subfloat[{\em Expected Spread}]{
\includegraphics{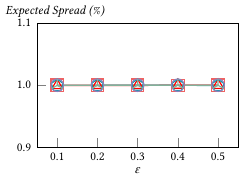}\label{fig:spread_eps}
}%
\subfloat[{\em Running Time}]{
\includegraphics{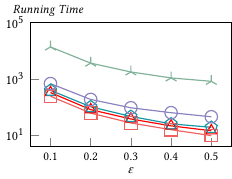}\label{fig:runtime_eps}
}%
\vspace{-3mm}
\end{small}
\caption{Results with varying $\varepsilon$.} \label{fig:four_inf_time_eps}
\vspace{-2mm}
\end{figure}

Then we investigate the relationship between running time and $\varepsilon$, the results of which are shown in Figure ~\ref{fig:runtime_eps}. Intuitively, when the error term increases, the running time will decrease. For smaller error term, the speed of the increase of running time will be higher, which means we will pay larger cost to obtain more accurate results. To investigate whether such a cost is worthwhile, we conduct experiments to illustrate the relationship between the influence spread and $\varepsilon$. As shown in Figure \ref{fig:spread_eps}, the final influence spread barely changes with the increase of $\varepsilon$.

Despite the fact that we need to sample a mass of RR sets to maintain the theoretical guarantee, a natural question is whether we need such a large number of samples in practice. In the third set of experiments, we set a parameter $\beta$ to control the number of RR sets to sample. That is, we set different values for $\beta$ and sample $\theta / \beta$ RR sets in our algorithm. Then we plot $\sigma(A, S)$ v.s. $\beta$. Due to the page limit, the results and analysis are given in Appendix \ref{app:beta}.

\section{Conclusion}\label{sec:conclusion}
% Conclusion
In this paper, we study the IMA problem under the IC model that aims to recommend links connecting seed nodes and ordinary nodes to augment the influence spread of a specific seed set. We propose an efficient estimator for augmented influence estimation and an accelerated sampling approach, based on which, we devise \ais that can scale to large graphs with millions of nodes while maintaining the theoretical guarantee of $(\greedyFactor)$. The experimental evaluations validate the superiority of \ais against several baselines, including the state-of-the-art \mcgreedy algorithm. As a future work, we plan to develop efficient algorithms for IMA under the liner threshold (LT) model. One can verify that Lemma~\ref{lem:link_marginal_gain} does not hold for the LT model. Hence, a key challenge lies in developing an efficient estimator under the LT model.  
% One can prove that $\puv / (1-\sum_{s\in N_{in}(v)} p_{s,v}) \cdot (\sigma(A, S\cup \{v\}) - \sigma(A, S))$ is an upper bound of $\sigma(A \cup \{e\}, S) - \sigma(A, S)$.
% Detailed analysis and extensive experiments are carried out to validate the superiority of \ais against several baselines, including the state-of-the-art \mcgreedy algorithm. Our results show that \ais consistently outperforms the heuristic baselines in terms of efficacy and it is often faster than \mcgreedy by orders of magnitude. For future work, we will try to dive deeper into this problem and see if there is any possibility to further increase the efficiency, such as eliminating the $k^2$ term in the time complexity.

% \spara{Remark} Under the liner threshold (LT) model, \citet{coro2021link} show that the augmented influence spread $\sigma(\cdot, S)$ becomes modular. Meanwhile, computing the exact influence spread under the LT model is still \#P-hard~\cite{chen2010scalableLT}. As a consequence, the \mcgreedy algorithm provides a $(1-\varepsilon)$-approximate solutions for the IMA problem under the LT model. Besides, Lemma \ref{lem:link_marginal_gain} does not hold for the LT model. One can prove that $\puv / (1-\sum_{s\in N_{in}(v)} p_{s,v}) \cdot (\sigma(A, S\cup \{v\}) - \sigma(A, S))$ is an upper bound of $\sigma(A \cup \{e\}, S) - \sigma(A, S)$.

\begin{acks}
This work is partially supported by the National Natural Science Foundation of China (NSFC) under Grant No.\ U22B2060, and by Guangzhou Municipal Science and Technology Bureau under Grant No.\ 2023A03J0667 and 2024A04J4454.
\end{acks}

\bibliographystyle{ACM-Reference-Format}
\balance
\bibliography{refs}

\newpage
\appendix
\section{Additional Evaluation}
\subsection{Influence Spread v.s. Number of Samples for \mcgreedy}
From Figure \ref{fig:nethept_k50_randseed_inf} we can observe that the result of \mcgreedy is slightly inferior to that of \ais. This is because the number of samples for \mcgreedy is not enough. As we can see from Figure~\ref{fig:mcg_inf_r}, when increasing the number of samples, the performance of \mcgreedy is also better. But when we increase $r$, the time expense will also increase a lot.
\begin{figure}[!tbp]
\centering
\begin{small}
% \begin{tikzpicture}
%     \begin{customlegend}[legend columns=2,
%         legend entries={$\mathsf{AIS}$, $\mathsf{MC-greedy}$},
%         legend style={at={(0.45,1.35)},anchor=north,draw=none,font=\footnotesize,column sep=0.15cm}]
%     \addlegendimage{smooth,color=red,mark=triangle, mark size=3}
%     \addlegendimage{smooth,mark=asterisk,color=XL4}
%     \end{customlegend}
% \end{tikzpicture}
% \\[-\lineskip]
% \vspace{-1mm}
\subfloat[{\em GRQC}]{
\includegraphics{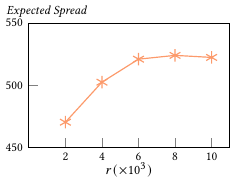}
}%
\subfloat[{\em NetHEPT}]{
\includegraphics{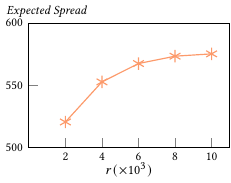}
}%

\vspace{-3mm}
\end{small}
\caption{Expected Spread v.s. $r$ on GRQC and NetHEPT} \label{fig:mcg_inf_r}
\vspace{-2mm}
\end{figure}

\subsection{Reducing the Number of Samples for \ais}\label{app:beta}
Despite the fact that we need to sample a mass of RR sets to maintain the theoretical guarantee, a natural question is whether we need such a large number of samples in practice. In the third set of experiments, we set a parameter $\beta$ to control the number of RR sets to sample. That is, we set different values for $\beta$ and sample $\theta / \beta$ RR sets in our algorithm. Then we plot $\sigma(A, S)$ v.s. $\beta$. From Figure ~\ref{fig:k50_beta}, we can see that reducing the number of samples does not have a significant impact on the solution quality, which makes \ais possible to apply to even larger datasets or deal with the scenarios where $k$ is large. However, notice that the time cost decreases with different rates for different datasets and the adjustment of $\beta$ does not necessarily bring a linear time cost reduction. This is because when the sampling is accelerated to certain extent, the bottleneck of the algorithm is on traversing the candidate edge set $\Ecand$.

\begin{figure}[!t]
\centering
\begin{small}
\includegraphics{figures/data_legend_line1.pdf}
\\[-\lineskip]
\vspace{-1mm}
\includegraphics{figures/data_legend_line2.pdf}
\vspace{-1mm}
\subfloat[{\em Expected Spread}]{
\includegraphics{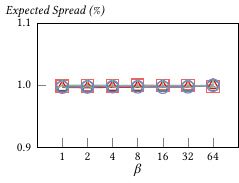}
}%
\subfloat[{\em Running Time}]{
\includegraphics{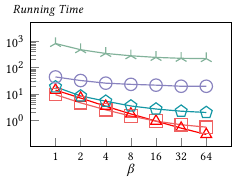}
}%
\vspace{-3mm}
\end{small}
\caption{The effect of $\beta$ ($k=50$).} \label{fig:k50_beta}
\vspace{-2mm}
\end{figure}
\end{document}